\newcolumntype{Y}{>{\centering\arraybackslash}X}
\newcommand\makebig[2]{%
  \@xp\newcommand\@xp*\csname#1\endcsname{\bBigg@{#2}}%
  \@xp\newcommand\@xp*\csname#1l\endcsname{\@xp\mathopen\csname#1\endcsname}%
  \@xp\newcommand\@xp*\csname#1r\endcsname{\@xp\mathclose\csname#1\endcsname}%
}
\newcommand{\pref}[1]{(\ref{#1})}
\newcommand{\ignore}[1]{}
\renewcommand{\Pr}{\mathbb{P}} 
\DeclareMathOperator{\EV}{\mathbb{E}} 
\newcommand{\abs}[1]{\left\vert#1\right\vert}
\renewcommand{\le}{\leqslant} 
\renewcommand{\ge}{\geqslant}
\renewcommand{\Re}{\operatorname{Re}}
\DeclareMathOperator{\ONE}{\mathchoice{\rm 1\mskip-4.2mu l}{\rm 1\mskip-4.2mu l}{\rm 1\mskip-4.6mu l}{\rm 1\mskip-5.2mu l}}
\newcommand{\indicator}[1]{\ONE_{\left\{#1\right\}}}
\theoremstyle{plain} 
\newtheorem{lemma}{Lemma}[section]
\newtheorem{remark}{Remark}[section]
\journal{Communications in Statistics---Simulation and Computation}
\begin{document}

\begin{frontmatter}



\title{\large{\bf\uppercase{Analytic Evaluation of the Fractional Moments for the Quasi-Stationary Distribution of the Shiryaev Martingale on an Interval}}}


\author[bu]{Kexuan\ Li}
\ead{kli@math.binghamton.edu}
\ead[url]{http://www2.math.binghamton.edu/p/people/grads/kli/start}
\author[bu]{Aleksey\ S.\ Polunchenko\corref{cor-author}}
\ead{aleksey@binghamton.edu}
\address[bu]{Department of Mathematical Sciences, State University of New York (SUNY) at Binghamton\\
Binghamtom, NY 13902--6000, USA}
\ead[url]{http://people.math.binghamton.edu/aleksey}
\author[cardiff]{Andrey\ Pepelyshev}
\ead{pepelyshevan@cardiff.ac.uk}
\address[cardiff]{School of Mathematics, Cardiff University, Cardiff, CF24 4AG, UK\\
and\\
St. Petersburg State University, St. Petersburg, Russia, 199034}
\cortext[cor-author]{Address correspondence to A.S.\ Polunchenko, Department of Mathematical Sciences, State University of New York (SUNY) at Binghamton, Binghamton, NY 13902--6000, USA; Tel: +1 (607) 777-6906; Fax: +1 (607) 777-2450; Email:~\href{mailto:aleksey@binghamton.edu}{aleksey@binghamton.edu}}

\begin{abstract}
We consider the quasi-stationary distribution of the classical Shiryaev diffusion restricted to the interval $[0,A]$ with absorption at a fixed $A>0$. We derive analytically a closed-form formula for the distribution's fractional moment of an {\em arbitrary} given order $s\in\mathbb{R}$; the formula is consistent with that previously found by~\cite{Polunchenko+Pepelyshev:SP2018} for the case of $s\in\mathbb{N}$. We also show by virtue of the formula that, if $s<1$, then the $s$-th fractional moment of the quasi-stationary distribution becomes that of the exponential distribution (with mean $1/2$) in the limit as $A\to+\infty$; the limiting exponential distribution is the stationary distribution of the reciprocal of the Shiryaev diffusion.
\end{abstract}

\begin{keyword}


Fractional moments\sep Markov diffusions\sep Quasi-stationarity\sep Shiryaev process\sep Stochastic processes
\end{keyword}

\end{frontmatter}

\section{Introduction}
\label{sec:intro}

This paper is an extension of the recent work of~\cite{Polunchenko+Pepelyshev:SP2018} devoted to explicit analytic evaluation of \begin{inparaenum}[\itshape(a)]\item the $n$-th moment, with $n\in\mathbb{N}$; and \item the Laplace transform \end{inparaenum} of the quasi-stationary distribution of the classical Shiryaev diffusion restricted to an interval; the Shiryaev process in question and its quasi-stationary distribution are formally introduced below. The principal contribution of the present work is a closed-form formula for the quasi-stationary distribution's {\em fractional} moment of {\em any} given order $s\in\mathbb{R}$. Specifically, we consider the Shiryaev process defined as the solution $(R_{t}^{r})_{t\ge0}$ of the stochastic differential equation
\begin{equation}\label{eq:Rt_r-def}
dR_{t}^{r}
=
dt+R_{t}^{r}dB_t
\;\;
\text{with}
\;\;
R_{0}^{r}\coloneqq r\ge0
\;\;
\text{fixed},
\end{equation}
where $(B_t)_{t\ge0}$ is standard Brownian motion in the sense that $\EV[dB_t]=0$, $\EV[(dB_t)^2]=dt$, and $B_0=0$; the initial value $r$ is often referred to as the process' headstart. The time-homogeneous Markov process $(R_{t}^{r})_{t\ge0}$ is an important particular version of the so-called generalized Shiryaev process. The latter has been first arrived at and studied by Prof. A.N. Shiryaev---hence, the name---in his fundamental work on quickest change-point detection where the process is typically referred to as the Shiryaev--Roberts process; see, e.g.,~\cite{Shiryaev:SMD61,Shiryaev:TPA63}. While the interest to the Shiryaev process in the context of quickest change-point detection has always been strong (see, e.g.,~\citealt{Pollak+Siegmund:B85,Shiryaev:Bachelier2002,Feinberg+Shiryaev:SD2006,Burnaev+etal:TPA2009,Polunchenko:SA2016,Polunchenko:SA2017a,Polunchenko:SA2017b,Polunchenko:TPA2017}), the process has received a great deal of attention in other areas as well, notably mathematical finance (see, e.g.,~\citealt{Geman+Yor:MF1993,Donati-Martin+etal:RMI2001,Linetsky:OR2004}) and mathematical physics (see, e.g.,~\citealt{Monthus+Comtet:JPhIF1994,Comtet+Monthus:JPhA1996}). It has also been studied in the literature on general stochastic processes (see, e.g.,~\citealt{Wong:SPMPE1964,Yor:AAP1992,Donati-Martin+etal:RMI2001,Dufresne:AAP2001,Schroder:AAP2003,Peskir:Shiryaev2006,Polunchenko+Sokolov:MCAP2016,Polunchenko+etal:TPA2018}).

The particular version $(R_{t}^{r})_{t\ge0}$ of the generalized Shiryaev process governed by equation~\eqref{eq:Rt_r-def} is of special importance and interest because it is the {\em only} version with probabilistically nontrivial behavior in the limit as $t\to+\infty$. This behavior is manifested through a certain invariant probability measure attained by the process (in the limit as $t\to+\infty$) in spite of the distinct martingale property $\EV[R_{t}^{r}-r-t]=0$ for all $t\ge0$ and $r\ge0$. Moreover, the process is convergent (as $t\to+\infty$) regardless of whether the state space is \begin{inparaenum}[(I)]\item the entire half-line $[0,+\infty)$ with no absorption on the interior; or\label{lst:GSR-QSD-case-I} \item the interval $[0,A]$ with absorption at a given level $A>0$; or\label{lst:GSR-QSD-case-II} \item the shortened half-line $[A,+\infty)$ also with absorption at $A>0$ given.\label{lst:GSR-QSD-case-III} \end{inparaenum} The case of a negative initial value $r$ was touched upon by~\cite{Peskir:Shiryaev2006} who, in particular, showed that, if $r\in(-\infty,0]$, then $(R_{t}^{r})_{t\ge0}$ has no invariant probability measure on the negative half-line, and is simply bound (with unit probability) to eventually exit the region through the upper (exit) boundary. Cases~\pref{lst:GSR-QSD-case-I},~\pref{lst:GSR-QSD-case-II}, and~\pref{lst:GSR-QSD-case-III} have all been considered in the literature. Brief surveys of the corresponding results and relevant references were recently offered, e.g., by~\cite{Polunchenko+etal:TPA2018} and by~\cite{Polunchenko+Pepelyshev:SP2018}.

This work's focus is on case~\pref{lst:GSR-QSD-case-II}. This case is of importance in quickest change-point detection, and in this context, it was investigated, e.g., by~\cite{Pollak+Siegmund:B85},~\cite{Burnaev+etal:TPA2009} and by~\cite{Polunchenko:SA2017a}. See also, e.g.,~\cite{Pollak+Siegmund:JAP1996},~\cite{Linetsky:OR2004} and~\cite[Section~7.8.2]{Collet+etal:Book2013}. For a fixed $A>0$ and a given $t>0$, the distribution of $R_{t}^{r}$ in this case is conditional on no absorption prior to $t$. This distribution was derived explicitly, e.g., by~\cite{Linetsky:OR2004} and by~\cite{Polunchenko:SA2016}. The limit of this distribution as $t\to+\infty$ is known as the {\em quasi}-stationary distribution. Formally, consider the stopping time
\begin{equation*}
\mathcal{S}_{A}^{r}
\coloneqq
\inf\{t\ge0\colon R_{t}^{r}=A\}
\;\;
\text{such that}
\;\;
\inf\{\varnothing\}=+\infty,
\end{equation*}
where $R_{0}^{r}\coloneqq r\ge0$ and $A>0$ are fixed; note that $\Pr(\mathcal{S}_{A}^{r}<+\infty)=1$. The quasi-stationary distribution of $(R_{t}^{r})_{t\ge0}$ is defined as
\begin{equation}\label{eq:QSD-def}
Q_{A}(x)
\coloneqq
\lim_{t\to+\infty}\Pr(R_{t}^{r}\le x|\mathcal{S}_{A}^{r}>t)
\;\;
\text{and}
\;\;
q_{A}(x)
\coloneqq
\dfrac{d}{dx}Q_{A}(x),
\end{equation}
and it does not depend on $r\in[0,A)$. The existence of this distribution was formally established, e.g., by~\cite{Pollak+Siegmund:B85}, although one can also infer the same conclusion, e.g., from the seminal work of~\cite{Mandl:CMJ1961}. Moreover, analytic closed-form formulae for $Q_{A}(x)$ and $q_{A}(x)$ were recently obtained by~\cite{Polunchenko:SA2017a}, apparently for the first time in the literature; see formulae~\eqref{eq:QSD-pdf-answer} and~\eqref{eq:QSD-cdf-answer} in Section~\ref{sec:preliminaries} below.

If $A\to+\infty$, then case~\pref{lst:GSR-QSD-case-II} becomes case~\pref{lst:GSR-QSD-case-III}, but the process $(R_{t}^{r})_{t\ge0}$ still has a nontrivial probabilistic behavior in the limit as $t\to+\infty$. This behavior is characterized by the invariant probability measure known as the {\em stationary} distribution. Formally, for the Shiryaev process~\eqref{eq:Rt_r-def} the latter is defined as
\begin{equation}\label{eq:SR-StDist-def}
H(x)
\coloneqq
\lim_{t\to+\infty}\Pr(R_{t}^{r}\le x)
\;\;
\text{and}
\;\;
h(x)
\coloneqq
\dfrac{d}{dx}H(x),
\end{equation}
and it, too, is independent of $r\in[0,+\infty)$. This distribution has already been found, e.g., by~\cite{Shiryaev:SMD61,Shiryaev:TPA63}, \cite{Pollak+Siegmund:B85}, \cite{Feinberg+Shiryaev:SD2006}, \cite{Burnaev+etal:TPA2009}, and \cite{Polunchenko+Sokolov:MCAP2016}, to be the momentless (no moments of orders one and higher) distribution
\begin{equation}\label{eq:SR-StDist-answer}
H(x)
=
e^{-\tfrac{2}{x}}\indicator{x\ge0}
\;\;
\text{and}
\;\;
h(x)
=
\dfrac{2}{x^2}\,e^{-\tfrac{2}{x}}\indicator{x\ge0},
\end{equation}
which is an extreme-value Fr\'{e}chet-type distribution, and a particular case of the inverse (reciprocal) Gamma distribution. See also, e.g.,~\cite{Linetsky:OR2004} and~\cite{Avram+etal:MPRF2013}. It is immediate to see from~\eqref{eq:SR-StDist-answer} that the stationary distribution of the reciprocal of the Shiryaev process is exponential with mean $1/2$. The fact that $Q_{A}(x)$ converges to $H(x)$ as $A\to+\infty$ was established, e.g., by~\cite{Pollak+Siegmund:B85,Pollak+Siegmund:JAP1996}; the convergence is from above, and is pointwise, at every $x\in[0,+\infty)$, i.e., at all continuity points of $H(x)$. The $Q$-to-$H$ convergence was recently investigated further by~\cite{Li+Polunchenko:SA2019} who used the explicit formula obtained by~\cite{Polunchenko:SA2017a} for $Q_{A}(x)$ to demonstrate that the uniform (in $x$) rate of convergence of $Q_{A}(x)$ down to $H(x)$ as $A\to+\infty$ is no slower than $O(\log(A)/A)$.

While the stationary distribution~\eqref{eq:SR-StDist-def}--\eqref{eq:SR-StDist-answer} is momentless (no moments of orders one and higher, but moments of orders $s<1$ do exist), the quasi-stationary distribution~\eqref{eq:QSD-def} is not: its entire (positive-integer-order) moment series was recently evaluated explicitly by~\cite{Polunchenko+Pepelyshev:SP2018} through the use of the closed-form formulae for $Q_{A}(x)$ and $q_{A}(x)$ obtained earlier by~\cite{Polunchenko:SA2017a}. The contribution of this work is an explicit formula (in different equivalent forms) for the quasi-stationary distribution's {\em fractional} moment of {\em any} given order $s\in\mathbb{R}$; the existence of the $s$-th order fractional moment is shown formally as well. All this is done in Section~\ref{sec:moment-formulae}, the core part of the paper. Since the $s$-th fractional moment formula and its derivation involve certain special functions, the centerpiece of the paper is prefaced with Section~\ref{sec:nomenclature} which introduces the relevant special functions. Section~\ref{sec:preliminaries} conveniently summarizes the essential earlier results on the quasi-stationary distribution~\eqref{eq:QSD-def} of the Shiryaev process~\eqref{eq:Rt_r-def}. Lastly, concluding remarks are provided in Section~\ref{sec:conclusion}.

\section{Notation and nomenclature}
\label{sec:nomenclature}

For convenience, standard mathematical notation will be used throughout the sequel. This includes the small array of special functions we shall deal with in the sections to follow. These functions, in their most common notation, are:
\begin{enumerate}
    \setlength{\itemsep}{10pt}
    \setlength{\parskip}{0pt}
    \setlength{\parsep}{0pt}
    \item The Gamma function $\Gamma(z)$, $z\in\mathbb{C}$, frequently also referred to as the extension of the factorial to complex numbers, due to the property $\Gamma(n)=(n-1)!$ exhibited for $n\in\mathbb{N}$. See, e.g.,~\cite[Chapter~1]{Bateman+Erdelyi:Book1953v1}.
    \item The digamma function $\psi(z)$, $z\in\mathbb{C}$, defined as the logarithmic derivative of the Gamma function:
    \begin{equation}\label{eq:digamma-fun-def}
    \psi(z)
    \coloneqq
    \dfrac{d}{dz}\log\Gamma(z)
    =
    \dfrac{\Gamma'(z)}{\Gamma(z)};
    \end{equation}
    see, e.g.,~\cite[Section~1.7]{Bateman+Erdelyi:Book1953v1}.
    \item The Pochhammer symbol, or the rising factorial, often notated as $(z)_n$ and defined for $z\in\mathbb{C}$ and $n\in\mathbb{N}\cup\{0\}$ as
    \begin{equation*}
        (z)_{n}
        \coloneqq
        \begin{cases}
        1,&\text{for $n=0$};\\
        z(z+1)\cdots(z+n-1),&\text{for $n\in\mathbb{N}$},
        \end{cases}
    \end{equation*}
    and it is of note that $(1)_n=n!$ for any $n\in\mathbb{N}\cup\{0\}$. See, e.g.,~\cite[pp.~16--18]{Srivastava+Karlsson:Book1985}. Also, observe that
    \begin{equation}\label{eq:PochhammerGamma-def}
    (z)_{n}
    =
    \dfrac{\Gamma(z+n)}{\Gamma(z)}
    \;
    \text{for}
    \;
    n\in\mathbb{N}\cup\{0\}
    \;
    \text{and}
    \;
    z\in\mathbb{C}\setminus\{0,-1,-2,\ldots\},
    \end{equation}
    and if $z$ is a negative integer or zero, i.e., if $z=-k$ and $k\in\mathbb{N}\cup\{0\}$, then
    \begin{equation}\label{eq:Pochhammer-negint}
    (-k)_{n}
    =
    \begin{cases}
    \dfrac{(-1)^{n}\,k!}{(k-n)!},&\text{for $n=0,1,\ldots,k$};\\[2mm]
    0,&\text{for $n=k+1,k+2,\ldots$};
    \end{cases}
    \end{equation}
    cf.~\cite[p.~16--17]{Srivastava+Karlsson:Book1985}.
    \item The generalized hypergeometric function, denoted as ${}_{p}F_{q}[z]$, and defined via the power series
    \begin{equation}\label{eq:pFq-function-def}
        {}_{p}F_{q}
        \left[
            \setlength{\arraycolsep}{0pt}
            \setlength{\extrarowheight}{2pt}
            \begin{array}{@{} c@{{}{}} @{}}
            a_1,a_2,\ldots,a_p
            \\[1ex]
            b_1,b_2,\ldots,b_q
            \\[2pt]
            \end{array}
            \;\middle|\;
            z
        \right]
        \coloneqq
        \sum_{n=0}^{\infty}\dfrac{(a_1)_n\,(a_2)_n\,\ldots\,(a_p)_n}{(b_1)_n\,(b_2)_n\,\ldots\,(b_q)_n}\,\dfrac{z^{n}}{n!},
    \end{equation}
    where $p,q\in\mathbb{N}$; it is to be understood here that the argument $z$ and the numerator and denominator parameters, $a_1,\ldots,a_p$ and $b_1,\ldots,b_q$, are such that the series on the right of~\eqref{eq:pFq-function-def} is convergent. At the very least, in view of~\eqref{eq:Pochhammer-negint}, no $b_i$, $i=1,2,\ldots,q$, should be a negative integer or zero. For more details on the convergence conditions, see, e.g.,~\cite[Chapter~4]{Bateman+Erdelyi:Book1953v1} and~\cite[p.~20]{Srivastava+Karlsson:Book1985}. The convergence question is trivial when one (or more) of the numeratorial parameters, say $a_i$ for some $i=1,2,\ldots,p$, is a negative integer or zero: in this case, in view of~\eqref{eq:Pochhammer-negint}, the power series on the right of~\eqref{eq:pFq-function-def} terminates turning the ${}_{p}F_{q}[z]$ function into a polynomial in $z$ of degree $(-a_i)\in\mathbb{N}\cup\{0\}$. Two special cases of the ${}_{p}F_{q}[z]$ function will arise in the sequel: one with $p=q=1$ and one with $p=q=2$. The ${}_{1}F_{1}[z]$ function corresponding to the former case is known as the Kummer~\citeyearpar{Kummer:FAM1836} function. For more information on the Kummer function and its properties, see, e.g.,~\cite{Slater:Book1960} and~\cite[Chapter~I, Section~1]{Buchholz:Book1969}.
    \item The Whittaker $M$ and $W$ functions, traditionally denoted, respectively, as $M_{a,b}(z)$ and $W_{a,b}(z)$, where $a,b,z\in\mathbb{C}$; the Whittaker $M$ function is undefined when $-2b\in\mathbb{N}$, but can be regularized. These functions were introduced by~\cite{Whittaker:BAMS1904} as the fundamental solutions to the Whittaker differential equation. See, e.g.,~\cite{Slater:Book1960} and~\cite{Buchholz:Book1969}.
\end{enumerate}

\section{Preliminary background on the quasi-stationary distribution}
\label{sec:preliminaries}

As was mentioned in the introduction, the quasi-stationary distribution defined in~\eqref{eq:QSD-def} was recently expressed analytically by~\cite{Polunchenko:SA2016}; the expression involves the Whittaker $W$ function. Specifically, it can be deduced from~\cite[Theorem~3.1]{Polunchenko:SA2016} that if $A>0$ is fixed and $\lambda\equiv\lambda_A>0$ is the smallest (positive) solution of the equation
\begin{equation}\label{eq:lambda-eqn}
W_{1,\tfrac{1}{2}\xi(\lambda)}\left(\dfrac{2}{A}\right)
=
0,
\end{equation}
where
\begin{equation}\label{eq:xi-def}
\xi(\lambda)
\coloneqq
\sqrt{1-8\lambda}
\;\;
\text{so that}
\;\;
\lambda
=
\dfrac{1}{8}\left(1-\big[\xi(\lambda)\big]^2\right),
\end{equation}
then the quasi-stationary probability density function (pdf) is given by
\begin{equation}\label{eq:QSD-pdf-answer}
q_A(x)
=
\dfrac{e^{-\tfrac{1}{x}}\,\dfrac{1}{x}\,W_{1,\tfrac{1}{2}\xi(\lambda)}\left(\dfrac{2}{x}\right)}{e^{-\tfrac{1}{A}}\,W_{0,\tfrac{1}{2}\xi(\lambda)}\left(\dfrac{2}{A}\right)}\indicator{x\in[0,A]},
\end{equation}
and the respective cumulative distribution function (cdf) is given by
\begin{equation}\label{eq:QSD-cdf-answer}
Q_A(x)
=
\begin{cases}
1,&\;\text{if $x\ge A$;}\\[2mm]
\dfrac{e^{-\tfrac{1}{x}}\,W_{0,\tfrac{1}{2}\xi(\lambda)}\left(\dfrac{2}{x}\right)}{e^{-\tfrac{1}{A}}\,W_{0,\tfrac{1}{2}\xi(\lambda)}\left(\dfrac{2}{A}\right)},&\;\text{if $x\in[0,A)$;}\\[8mm]
0,&\;\text{otherwise},
\end{cases}
\end{equation}
and $q_A(x)$ and $Q_A(x)$ are each a smooth function of $x$ and $A$; observe from~\eqref{eq:lambda-eqn}, \eqref{eq:xi-def}, and~\eqref{eq:QSD-pdf-answer} that $q_A(A)=0$. The smoothness of $q_A(x)$ and $Q_A(x)$ is due to certain analytic properties of the Whittaker $W$ function present on the right of formulae~\eqref{eq:QSD-pdf-answer} and~\eqref{eq:QSD-cdf-answer}. These formulae stem from the solution of a certain Sturm--Liouville problem, and $\lambda$ is the smallest positive eigenvalue of the corresponding Sturm--Liouville operator; it should be noted that \cite{Polunchenko:SA2016} considered the {\em negated} Sturm--Liouville operator, causing $\lambda$ to be its largest {\em negative} eigenvalue.
\begin{remark}\label{rem:xi-symmetry}
The definition~\eqref{eq:xi-def} of $\xi(\lambda)$ can actually be changed to $\xi(\lambda)\coloneqq -\sqrt{1-8\lambda}$ with no effect whatsoever on either equation~\eqref{eq:lambda-eqn}, or formulae~\eqref{eq:QSD-pdf-answer} and~\eqref{eq:QSD-cdf-answer}, i.e., all three are invariant with respect to the sign of $\xi(\lambda)$. This was previously pointed out by~\cite{Polunchenko:SA2017a}, and the reason for this $\xi(\lambda)$-symmetry is because equation~\eqref{eq:lambda-eqn} and formulae~\eqref{eq:QSD-pdf-answer} and~\eqref{eq:QSD-cdf-answer} each have $\xi(\lambda)$ present only as (double) the second index of the corresponding Whittaker $W$ function or functions involved, and the Whittaker $W$ function in general is known (see, e.g.,~\citealt[Identity~(19),~p.~19]{Buchholz:Book1969}) to be an even function of its second index, i.e., $W_{a,b}(z)=W_{a,-b}(z)$.
\end{remark}

It is evident that equation~\eqref{eq:lambda-eqn} is a key ingredient of formulae~\eqref{eq:QSD-pdf-answer} and~\eqref{eq:QSD-cdf-answer}, and consequently, of all of the characteristics of the quasi-stationary distribution as well. As a transcendental equation, it can only be solved numerically, although to within any desired accuracy; numerical analyses of equation~\eqref{eq:lambda-eqn} were previously carried out, e.g., by~\cite{Linetsky:OR2004} and by~\cite{Polunchenko:SA2016,Polunchenko:SA2017a,Polunchenko:SA2017b}. Yet, it is known (see, e.g.,~\citealt{Linetsky:OR2004} and~\citealt{Polunchenko:SA2016}) that for any fixed $A>0$, equation~\eqref{eq:lambda-eqn} has countably many simple solutions $0<\lambda_1<\lambda_2<\lambda_3<\cdots$, such that $\lim_{k\to+\infty}\lambda_k=+\infty$. All of them depend on $A$, but since we are interested only in the smallest one, we shall use either the ``short'' notation $\lambda$, or the more explicit notation $\lambda_A$ to emphasize the dependence on $A$. Also, it can be concluded from~\cite[p.~136~and~Lemma~3.3]{Polunchenko:SA2016} that $\lambda_A$ is a monotonically decreasing function of $A$ such that $\lim_{A\to+\infty}\lambda_A=0$. More specifically, certain properties of the moments of the quasi-stationary distribution~\eqref{eq:QSD-pdf-answer}--\eqref{eq:QSD-cdf-answer} necessitate that
\begin{equation}\label{eq:lambda-dbl-ineq-asymp}
\dfrac{1}{A}+\dfrac{1}{A(A+1)}
<
\lambda_{A}
<
\dfrac{1}{A}+\dfrac{1+\sqrt{4A+1}}{2A^2}
\;
\text{for any}
\;
A>0,
\;
\text{so that}
\;
\lambda_{A}
=
\dfrac{1}{A}+O\left(\dfrac{1}{\sqrt{A^{3}}}\right);
\end{equation}
cf.~\cite[p.~136]{Polunchenko:SA2017a} and~\cite[p.~1360]{Polunchenko+Pepelyshev:SP2018}.

\section{The fractional moment formulae}
\label{sec:moment-formulae}

Let $X$ be a random variable sampled from the quasi-stationary distribution given by~\eqref{eq:QSD-pdf-answer}--\eqref{eq:QSD-cdf-answer}. Fix $s\in\mathbb{R}$ and let $\mathfrak{M}_{s}\coloneqq\EV[X^{s}]$ denote the $s$-th fractional moment of $X$. It is apparent that $\mathfrak{M}_{0}\equiv1$ for any $A>0$, but all other $\mathfrak{M}_{s}$'s actually do depend on $A$, and, if necessary, we shall stress this dependence via the notation $\mathfrak{M}_{s}(A)$. The object of this section---the centerpiece of this work---is to find $\mathfrak{M}_{s}(A)$ analytically for any given $s\in\mathbb{R}$ and $A>0$.

The obvious first question to ask is that of whether the $\mathfrak{M}_{s}$'s actually do exist for each fixed $s\in\mathbb{R}$ and $A>0$. For $s\ge0$ the answer is clearly ``yes''. For $s<0$ the answer is also in the affirmative inasmuch as the definition of $\mathfrak{M}_{s}$ and the explicit formula~\eqref{eq:QSD-pdf-answer} for $q_{A}(x)$ together yield
\begin{equation*}
\mathfrak{M}_{s}
\coloneqq
\bigintsss_{0}^{A}
x^{s}
\,
q_{A}(x)\,dx
=
2^{s}
\bigints_{\tfrac{2}{A}}^{+\infty}
t^{-s}\,
\dfrac{e^{-\tfrac{t}{2}}\,W_{1,\tfrac{\xi(\lambda)}{2}}(t)}{e^{-\tfrac{1}{A}}\,W_{0,\tfrac{\xi(\lambda)}{2}}\left(\dfrac{2}{A}\right)}\,\dfrac{dt}{t},
\end{equation*}
and the latter integral is convergent, for
\begin{equation*}
W_{1,b}(z)
=
z\,e^{-\tfrac{z}{2}}\left[1+O\left(\dfrac{1}{z}\right)\right]
,
\;
\text{as}
\;
z\to+\infty
\;
\text{for any}
\;
b\in\mathbb{C},
\end{equation*}
which is a special case of the more general asymptotic result
\begin{equation*}
W_{a,b}(z)
=
z^{a}\,e^{-\tfrac{z}{2}}\left[1+O\left(\dfrac{1}{z}\right)\right]
,
\;
\text{as}
\;
\abs{z}\to+\infty
\;
\text{for any}
\;
b\in\mathbb{C},
\end{equation*}
established, e.g., in~\cite[Section~16.3]{Whittaker+Watson:Book1927}.

If $s=n\in\mathbb{N}$, then it can be inferred from~\cite[Theorem~3.2]{Polunchenko:SA2017a} that, for $A>0$ fixed, the series $\{\mathfrak{M}_{n}\}_{n\ge0}$ satisfies the recurrence
\begin{equation}\label{eq:QSD-moments-recurrence}
\left(\dfrac{n(n-1)}{2}+\lambda\right)\mathfrak{M}_{n}+n\,\mathfrak{M}_{n-1}
=
\lambda A^{n},
\;\;
n\in\mathbb{N},
\end{equation}
with $\mathfrak{M}_{0}\equiv 1$ for any $A>0$; recall that $\lambda\equiv\lambda_{A}$ and $A$ are interconnected via equation~\eqref{eq:lambda-eqn}. This recurrence was recently solved explicitly by~\cite{Polunchenko+Pepelyshev:SP2018}.
\begin{lemma}[\citealt{Polunchenko+Pepelyshev:SP2018}]
For every $A>0$ fixed, the solution $\{\mathfrak{M}_{n}\}_{n\ge0}$ to the recurrence~\eqref{eq:QSD-moments-recurrence} is given by
\begin{equation}\label{eq:QSD-Mn-answer-2F2}
\mathfrak{M}_{n}
\equiv
\mathfrak{M}_{n}(A)
=
\dfrac{2\lambda\, A^{n}}{n(n-1)+2\lambda}\,
{}_{2}F_{2}
\left[
    \setlength{\arraycolsep}{0pt}
    \setlength{\extrarowheight}{2pt}
    \begin{array}{@{} c@{{}{}} @{}}
    1,-n
    \\[1ex]
    \dfrac{3}{2}+\dfrac{\xi(\lambda)}{2}-n,\dfrac{3}{2}-\dfrac{\xi(\lambda)}{2}-n
    \\[5pt]
    \end{array}
    \;\middle|\;
    \dfrac{2}{A}
\right],
\;
n\in\mathbb{N}\cup\{0\},
\end{equation}
where $\lambda\equiv\lambda_A\;(>0)$ is determined by~\eqref{eq:lambda-eqn} while $\xi(\lambda)$ is defined in~\eqref{eq:xi-def}; recall also that ${}_{2}F_{2}[z]$ denotes the corresponding case of the generalized hypergeometric function~\eqref{eq:pFq-function-def}.
\end{lemma}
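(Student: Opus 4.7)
The plan is to prove the lemma by direct verification. Because the recurrence~\eqref{eq:QSD-moments-recurrence} is first-order linear in $(\mathfrak{M}_n,\mathfrak{M}_{n-1})$ with leading coefficient $\tfrac{n(n-1)}{2}+\lambda>0$ (since $\lambda\equiv\lambda_A>0$), it determines the whole sequence uniquely from $\mathfrak{M}_0=1$; it therefore suffices to show that the right-hand side of~\eqref{eq:QSD-Mn-answer-2F2} satisfies both the initial condition and the recurrence. The initial condition is immediate: at $n=0$ the prefactor $\tfrac{2\lambda A^n}{n(n-1)+2\lambda}$ equals $1$, and the ${}_2F_2$ series collapses to its $k=0$ term (since $(-0)_k=0$ for $k\ge 1$), giving $\mathfrak{M}_0=1$.

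For $n\ge 1$, I would write $F_n(z)$ for the ${}_{2}F_{2}$ factor in~\eqref{eq:QSD-Mn-answer-2F2} and rearrange~\eqref{eq:QSD-moments-recurrence} into the equivalent form $\lambda A^{n}\bigl(F_n(2/A)-1\bigr)=-n\,\mathfrak{M}_{n-1}$. Since $(-n)_0=1$, the bracket $F_n(z)-1$ is precisely the tail $\sum_{k\ge 1}$ of a terminating power series. The key step is to reindex $k\mapsto k+1$ and factor out via the Pochhammer shift $(c)_{k+1}=c\,(c+1)_k$ applied to each of $(-n)_{k+1}$, $(\alpha-n)_{k+1}$, and $(\beta-n)_{k+1}$, where $\alpha\coloneqq\tfrac{3+\xi(\lambda)}{2}$ and $\beta\coloneqq\tfrac{3-\xi(\lambda)}{2}$. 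Because $-(n-1)=-n+1$, $\alpha-(n-1)=\alpha-n+1$, and $\beta-(n-1)=\beta-n+1$ are exactly the numeratorial and denominatorial parameters appearing in $F_{n-1}$, the shifted series reassembles into
\begin{equation*}
F_n(z)-1 \;=\; \frac{-n\,z}{(\alpha-n)(\beta-n)}\,F_{n-1}(z).
\end{equation*}

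The identity that closes the argument is
\begin{equation*}
(\alpha-n)(\beta-n) \;=\; n^2-3n+\alpha\beta \;=\; (n-1)(n-2)+2\lambda,
\end{equation*}
which uses $\alpha+\beta=3$ and $\alpha\beta=\tfrac{9-\xi(\lambda)^2}{4}=2+2\lambda$, the latter by~\eqref{eq:xi-def}. Substituting $z=2/A$ then yields $\lambda A^{n}\bigl(F_n(2/A)-1\bigr)=-\tfrac{2n\lambda A^{n-1}}{(n-1)(n-2)+2\lambda}F_{n-1}(2/A)=-n\,\mathfrak{M}_{n-1}$, completing the verification. The main obstacle, really the only one, is the careful bookkeeping of the index shift and Pochhammer rearrangement; well-definedness of the denominators $(\alpha-n)_k(\beta-n)_k$ for $1\le k\le n$ is automatic, since a short check using $\alpha+\beta=3$ and $\alpha\beta=2+2\lambda$ shows that $\alpha$ and $\beta$ are never positive integers when $\lambda>0$.
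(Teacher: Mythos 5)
Your verification is correct. The paper itself does not prove this lemma — it simply cites \cite{Polunchenko+Pepelyshev:SP2018} for it — so there is no in-paper argument to compare against; the reference's own derivation (which proceeds by solving the first-order recurrence via an integrating-factor / summation-by-parts route and then recognizing the resulting sum as a terminating ${}_2F_2$) is constructive, whereas yours is a direct substitution check. Your approach is cleaner as a verification: the index shift $k\mapsto k+1$ together with $(c)_{k+1}=c\,(c+1)_k$ gives exactly the three-term Pochhammer cancellation needed, the identity $(\alpha-n)(\beta-n)=(n-1)(n-2)+2\lambda$ follows from $\alpha+\beta=3$ and $\alpha\beta=\tfrac{9-\xi^2}{4}=2+2\lambda$ via~\eqref{eq:xi-def}, and the uniqueness argument (leading coefficient $\tfrac{n(n-1)}{2}+\lambda>0$ for all $n\ge 0$ since $\lambda>0$) closes the loop. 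Your side remark on well-definedness is also right: for $0<\lambda\le\tfrac{1}{8}$ one has $\xi\in[0,1)$, so $\alpha\in[\tfrac{3}{2},2)$ and $\beta\in(1,\tfrac{3}{2}]$ are non-integers, and for $\lambda>\tfrac{1}{8}$ they are non-real complex conjugates, so no denominatorial Pochhammer factor ever vanishes. What the direct-verification route loses relative to a constructive derivation is any indication of how one would discover the ${}_2F_2$ form in the first place, but as a proof of the stated identity it is complete and arguably more economical.
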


It stands to mention that since $n$ in~\eqref{eq:QSD-Mn-answer-2F2} is a nonnegative integer, the right hand side of~\eqref{eq:QSD-Mn-answer-2F2} is actually a polynomial of degree $n$ in $(2/A)$. See~\cite[Formula~(15), p.~1359]{Polunchenko+Pepelyshev:SP2018}. However, while it is essential for the validity of formula~\eqref{eq:QSD-Mn-answer-2F2} that $n$ in it be a nonnegative integer, the recurrence~\eqref{eq:QSD-moments-recurrence} itself holds true even if $n$ is an arbitrary real number. This conclusion can be reached from a more careful analysis of the steps taken by~\cite{Polunchenko:SA2017a} to arrive at a slightly more general version of the recurrence~\eqref{eq:QSD-moments-recurrence}, plus the fact established above that the quasi-stationary distribution~\eqref{eq:QSD-pdf-answer}--\eqref{eq:QSD-cdf-answer} has all fractional moments, of any real order. Thus the fractional moments $\{\mathfrak{M}_{s}\}_{s\in\mathbb{R}}$ follow the recurrence
\begin{equation}\label{eq:QSD-frac-moments-recurrence}
\left(\dfrac{s(s-1)}{2}+\lambda\right)\mathfrak{M}_{s}+s\,\mathfrak{M}_{s-1}
=
\lambda A^{s},
\;\;
s\in\mathbb{R},
\end{equation}
where $\mathfrak{M}_{0}\equiv 1$ for any $A>0$, and our goal of finding $\mathfrak{M}_{s}$ explicitly becomes a matter of solving~\eqref{eq:QSD-frac-moments-recurrence} analytically. To this end, even though~\eqref{eq:QSD-frac-moments-recurrence} was obtained from~\eqref{eq:QSD-moments-recurrence} by (justifiably) replacing $n$ in the latter with $s\in\mathbb{R}$, the same ``trick'' performed on the solution~\eqref{eq:QSD-Mn-answer-2F2} to recurrence~\eqref{eq:QSD-moments-recurrence} will only yield a {\em particular} solution to~\eqref{eq:QSD-frac-moments-recurrence}, but not the {\em complete} solution. Moreover, formula~\eqref{eq:QSD-Mn-answer-2F2} was arrived at by~\cite{Polunchenko+Pepelyshev:SP2018} as a particular solution to~\eqref{eq:QSD-moments-recurrence} that just happens to satisfy the condition $\mathfrak{M}_{0}\equiv 1$, so there was no need to find the general solution to the homogeneous version of~\eqref{eq:QSD-moments-recurrence}. We shall now fill in this gap, i.e., make the obvious first step toward solving~\eqref{eq:QSD-frac-moments-recurrence} completely.
\begin{lemma}
For every $A>0$ fixed, the {\em general} solution $\{\mathfrak{M}_{s}\}_{s\in\mathbb{R}}$ to the recurrence~\eqref{eq:QSD-frac-moments-recurrence} is given by
\begin{equation}\label{eq:QSD-Ms-gen-answer-2F2}
\begin{split}
\mathfrak{M}_{s}
\equiv
\mathfrak{M}_{s}(A)
&=
\dfrac{2\lambda\, A^{s}}{s(s-1)+2\lambda}
\,
{}_{2}F_{2}
\left[
    \setlength{\arraycolsep}{0pt}
    \setlength{\extrarowheight}{2pt}
    \begin{array}{@{} c@{{}{}} @{}}
    1,-s
    \\[1ex]
    \dfrac{3}{2}+\dfrac{\xi(\lambda)}{2}-s,\dfrac{3}{2}-\dfrac{\xi(\lambda)}{2}-s
    \\[5pt]
    \end{array}
    \;\middle|\;
    \dfrac{2}{A}
\right]
+
\\
&\qquad\qquad\qquad+
C\,\dfrac{2^{s}}{\Gamma(-s)}\,\Gamma\left(\dfrac{1}{2}+\dfrac{\xi(\lambda)}{2}-s\right)\Gamma\left(\dfrac{1}{2}-\dfrac{\xi(\lambda)}{2}-s\right)
,
\;
s\in\mathbb{R},
\end{split}
\end{equation}
where $C\equiv C_{A}$ is an arbitrary constant (independent of $s$ but possibly dependent on $A>0$), and $\lambda\equiv\lambda_A\;(>0)$ is determined by~\eqref{eq:lambda-eqn} while $\xi(\lambda)$ is defined in~\eqref{eq:xi-def}; recall also that ${}_{2}F_{2}[z]$ denotes the corresponding case of the generalized hypergeometric function~\eqref{eq:pFq-function-def}.
\end{lemma}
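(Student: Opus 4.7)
The plan is to tackle the first-order linear (non-autonomous) difference equation~\eqref{eq:QSD-frac-moments-recurrence} by the standard three-step route: (i) exhibit one particular solution of the inhomogeneous equation, (ii) find the general solution of the associated homogeneous equation $\bigl(\tfrac{s(s-1)}{2}+\lambda\bigr)f_{s}+s\,f_{s-1}=0$, and (iii) add them.

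For step (i), I would take the obvious candidate, namely the right-hand side of~\eqref{eq:QSD-Mn-answer-2F2} with $n$ replaced throughout by $s\in\mathbb{R}$. To verify that this remains a solution in the real-$s$ case, I would revisit the derivation of~\eqref{eq:QSD-Mn-answer-2F2} in~\cite{Polunchenko+Pepelyshev:SP2018}: it proceeds termwise on the power series~\eqref{eq:pFq-function-def} via the Pochhammer identity $(a)_{k+1}=(a+k)(a)_k$, and is thus purely symbolic in $n$. The series defining ${}_{2}F_{2}$ is entire in its argument $2/A$, and the denominator parameters $\tfrac{3}{2}\pm\tfrac{\xi(\lambda)}{2}-s$ avoid the non-positive integers for generic $s$, so the manipulation goes through verbatim and delivers the first summand of~\eqref{eq:QSD-Ms-gen-answer-2F2}.

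For step (ii), the key move is to rewrite the leading coefficient of the recurrence in factored form. Using $\lambda=(1-[\xi(\lambda)]^{2})/8$ from~\eqref{eq:xi-def},
\begin{equation*}
\dfrac{s(s-1)}{2}+\lambda
\;=\;
\dfrac{(2s-1)^{2}-[\xi(\lambda)]^{2}}{8}
\;=\;
\dfrac{1}{2}\!\left(s-\dfrac{1+\xi(\lambda)}{2}\right)\!\left(s-\dfrac{1-\xi(\lambda)}{2}\right),
\end{equation*}
so the homogeneous equation becomes the multiplicative relation
\begin{equation*}
\dfrac{f_{s}}{f_{s-1}}
\;=\;
\dfrac{-2s}{\left(s-\dfrac{1+\xi(\lambda)}{2}\right)\!\left(s-\dfrac{1-\xi(\lambda)}{2}\right)}.
\end{equation*}
This is precisely the kind of rational ratio for which the functional equation $\Gamma(z+1)=z\,\Gamma(z)$ produces a closed form: with the ansatz
\begin{equation*}
g_{s}
\;=\;
\dfrac{2^{s}\,\Gamma\!\left(\tfrac{1}{2}+\tfrac{\xi(\lambda)}{2}-s\right)\Gamma\!\left(\tfrac{1}{2}-\tfrac{\xi(\lambda)}{2}-s\right)}{\Gamma(-s)},
\end{equation*}
a short direct computation---using $\Gamma(1-s)/\Gamma(-s)=-s$ together with $\Gamma(\tfrac{3}{2}\pm\tfrac{\xi(\lambda)}{2}-s)=(\tfrac{1}{2}\pm\tfrac{\xi(\lambda)}{2}-s)\,\Gamma(\tfrac{1}{2}\pm\tfrac{\xi(\lambda)}{2}-s)$---shows that $g_{s}/g_{s-1}$ equals exactly the required ratio, so $g_{s}$ is a homogeneous solution.

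Step (iii) is then the assembly: the general solution of~\eqref{eq:QSD-frac-moments-recurrence} is (particular) $+$ (homogeneous) $=$ (the first summand of~\eqref{eq:QSD-Ms-gen-answer-2F2}) $+\,C\,g_{s}$, which is exactly~\eqref{eq:QSD-Ms-gen-answer-2F2}. The one subtle point---and what I expect to be the main obstacle---lies in the word \emph{general}: on $\mathbb{R}$ a first-order linear difference equation admits homogeneous solutions of the form $\phi(s)\,g_{s}$ with $\phi$ any $1$-periodic function, so collapsing this freedom to a single scalar $C=C_{A}$ tacitly restricts attention to an appropriate regularity class (e.g., solutions meromorphic in $s$ with polynomial growth along vertical strips). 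Both the ${}_{2}F_{2}$ particular solution and $g_{s}$ lie comfortably in such a class, which is why formula~\eqref{eq:QSD-Ms-gen-answer-2F2} captures the general admissible solution via one scalar parameter; I would state this regularity restriction explicitly in the final write-up.
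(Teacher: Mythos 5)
Your proof takes essentially the same route as the paper's: decompose into a particular solution (the ${}_{2}F_{2}$ term carried over from the integer-$n$ case of \cite{Polunchenko+Pepelyshev:SP2018}) plus a homogeneous solution found by factoring $\tfrac{s(s-1)}{2}+\lambda$ via~\eqref{eq:xi-def} and applying $\Gamma(z+1)=z\,\Gamma(z)$, and your explicit check of the ratio $g_{s}/g_{s-1}$ is precisely the computation the paper compresses into the phrase ``iterate the corresponding first-order recurrence forward in $s$.'' Your closing remark about $1$-periodic multipliers is a correct and genuine subtlety that the paper leaves implicit; making the required regularity class explicit is a small but real gain in precision over the paper's own proof.
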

\begin{proof}
The right-hand side of formula~\eqref{eq:QSD-Ms-gen-answer-2F2} has a fairly transparent structure: it is a superposition of a particular solution to~\eqref{eq:QSD-frac-moments-recurrence} and the general solution to the homogeneous version of~\eqref{eq:QSD-frac-moments-recurrence}. The particular solution is given by the ${}_{2}F_{2}[z]$ term on the right of~\eqref{eq:QSD-Ms-gen-answer-2F2}, and it was already demonstrated explicitly by~\cite{Polunchenko+Pepelyshev:SP2018} that it ``works''. To see that the second term (proportional to $C$) on the right of~\eqref{eq:QSD-Ms-gen-answer-2F2} is, indeed, the general solution to the homogeneous version of~\eqref{eq:QSD-frac-moments-recurrence}, it suffices to iterate the corresponding first-order recurrence forward in $s$, and then simplify the resulting expression using~\eqref{eq:xi-def} and the well-known factorial property of the Gamma function, i.e., $\Gamma(z+1)=z\Gamma(z)$ for any $z\in\mathbb{C}$.
\end{proof}

The question now is how to ``pin down'' the constant $C$ on the right in~\eqref{eq:QSD-Ms-gen-answer-2F2}. Given that~\eqref{eq:QSD-frac-moments-recurrence} is an order-one nonhomogeneous difference equation, the question would be trivial if we had $\mathfrak{M}_{s}$ computed as a function of $A>0$ for at least one particular $s\in\mathbb{R}$. To this end, setting $s=0$ seems the obvious first choice, because $\mathfrak{M}_{0}\equiv1$ for any $A>0$. However, setting $s=0$ in the general solution~\eqref{eq:QSD-Ms-gen-answer-2F2} results in $\mathfrak{M}_{0}=1$ for {\em any} $C$. The reason is because the denominator of the second term on the right of~\eqref{eq:QSD-Ms-gen-answer-2F2} is $\Gamma(-s)$ which has a (simple) pole at $s=0$, so that, no matter what $C$ is, the entire second term is nullified, reducing~\eqref{eq:QSD-Ms-gen-answer-2F2} down to~\eqref{eq:QSD-Mn-answer-2F2}. This circumstance renders the condition $\mathfrak{M}_0\equiv 1$ practically useless for the purposes of finding $C$. As a matter of fact, the same happens for any $s=n\in\mathbb{N}\cup\{0\}$, inasmuch as the Gamma function $\Gamma(z)$ is known to have (simple) poles at all nonnegative-integer values of the argument. This explains why~\cite{Polunchenko+Pepelyshev:SP2018}, in their derivation of solution~\eqref{eq:QSD-Mn-answer-2F2} to recurrence~\eqref{eq:QSD-moments-recurrence}, never had to find the general solution to the homogeneous version of~\eqref{eq:QSD-moments-recurrence}: even if this general solution were found, it would have ended up evaluating to zero for any $n\in\mathbb{N}\cup\{0\}$ anyway, entirely independently of arbitrary constant factor. The ``uselessness'' of the condition $\mathfrak{M}_0\equiv 1$ is easily overcome by the following observation: since
\begin{equation*}
\dfrac{s(s-1)}{2}
+
\lambda
=
0
\;\;
\text{for}
\;\;
s=
\dfrac{1}{2}\pm
\dfrac{\xi(\lambda)}{2},
\end{equation*}
because of~\eqref{eq:xi-def}, we immediately obtain from~\eqref{eq:QSD-frac-moments-recurrence} that
\begin{equation}\label{eq:QSD-M_1over2_pm_xi_over2-answer}
\mathfrak{M}_{-\tfrac{1}{2}\pm\tfrac{\xi(\lambda)}{2}}
=
\dfrac{2\lambda}{1\pm\xi(\lambda)}\,A^{\tfrac{1}{2}\pm\tfrac{\xi(\lambda)}{2}}
=
\dfrac{1\mp\xi(\lambda)}{4}\,A^{\tfrac{1}{2}\pm\tfrac{\xi(\lambda)}{2}},
\end{equation}
where the second equality is, again, due to~\eqref{eq:xi-def}. In other words, some of the $\mathfrak{M}_s$'s are given away ``for free'' by the recurrence~\eqref{eq:QSD-frac-moments-recurrence} itself. Alternatively, from~\eqref{eq:QSD-pdf-answer} and the definition of $\mathfrak{M}_{s}$ we have
\begin{equation}
\begin{split}
\mathfrak{M}_{-\tfrac{1}{2}\pm\tfrac{\xi(\lambda)}{2}}
&\coloneqq
\bigintsss_{0}^{A} x^{-\tfrac{1}{2}\pm\tfrac{\xi(\lambda)}{2}}\,q_{A}(x)\,dx
=
\bigints_{1}^{\infty} \left(\dfrac{t}{A}\right)^{\tfrac{1}{2}\mp\tfrac{\xi(\lambda)}{2}}\,\dfrac{e^{-\tfrac{1}{A}t}\,W_{1,\tfrac{\xi(\lambda)}{2}}\left(\dfrac{2}{A}t\right)}{e^{-\tfrac{1}{A}}\,W_{0,\tfrac{\xi(\lambda)}{2}}\left(\dfrac{2}{A}\right)}\,\dfrac{dt}{t}
=
\\
&
\qquad\qquad
=
\sqrt{\dfrac{A}{2}}\,A^{-\tfrac{1}{2}\pm\tfrac{\xi(\lambda)}{2}}\,\dfrac{W_{\tfrac{1}{2},\tfrac{\xi(\lambda)}{2}\mp\tfrac{1}{2}}\left(\dfrac{2}{A}\right)}{W_{0,\tfrac{\xi(\lambda)}{2}}\left(\dfrac{2}{A}\right)},
\end{split}
\end{equation}
where the last equality is because of~\cite[Identity~2.19.5.6,~p.~217]{Prudnikov+etal:Book1990}, i.e., the definite integral identity
\begin{equation}
\int_{\alpha}^{\infty}(x-\alpha)^{\beta-1}\, x^{\pm b-\tfrac{1}{2}}\,e^{-\tfrac{1}{2}ax}\,W_{\varkappa,b}(ax)\,dx
=
{\alpha}^{\pm b+\tfrac{\beta-1}{2}}\,\Gamma(\beta)\,e^{-\tfrac{\alpha}{2}a}\,W_{\varkappa-\tfrac{\beta}{2},b\pm\tfrac{\beta}{2}}(a),
\end{equation}
valid as long as $\alpha>0$, $\Re(\beta)>0$, and $\Re(a)>0$. Next, from the identity
\begin{equation*}
W_{\varkappa+\tfrac{1}{2},b\pm\tfrac{1}{2}}(z)
=
\dfrac{1\pm2b+z}{2\sqrt{z}}\,W_{\varkappa,b}(z)-\sqrt{z}\left[\dfrac{\partial}{\partial z}W_{\varkappa,b}(z)\right],
\end{equation*}
which is~\cite[Identity~(2.4.22),~p.~25]{Slater:Book1960} and~\cite[Identity~(2.4.23),~p.~25]{Slater:Book1960} combined, we obtain
\begin{equation*}
W_{\tfrac{1}{2},\tfrac{\xi(\lambda)}{2}\pm\tfrac{1}{2}}\left(\dfrac{2}{A}\right)
=
\dfrac{1\pm\xi(\lambda)+2/A}{2\sqrt{2/A}}\,W_{0,\tfrac{\xi(\lambda)}{2}}\left(\dfrac{2}{A}\right)-\sqrt{\dfrac{2}{A}}\left.\left[\dfrac{\partial}{\partial z}W_{0,\tfrac{\xi(\lambda)}{2}}(z)\right]\right|_{z=\tfrac{2}{A}},
\end{equation*}
whence
\begin{equation*}
W_{\tfrac{1}{2},\tfrac{\xi(\lambda)}{2}\pm\tfrac{1}{2}}\left(\dfrac{2}{A}\right)
=
\dfrac{1\mp\xi(\lambda)}{2}\,\sqrt{\dfrac{A}{2}}\,W_{0,\tfrac{\xi(\lambda)}{2}}\left(\dfrac{2}{A}\right),
\end{equation*}
because
\begin{equation*}
\left.\left[\dfrac{\partial}{\partial z}W_{0,\tfrac{\xi(\lambda)}{2}}(z)\right]\right|_{z=\tfrac{2}{A}}
=
\dfrac{1}{2}\,W_{0,\tfrac{\xi(\lambda)}{2}}\left(\dfrac{2}{A}\right),
\end{equation*}
as is implied by~\cite[Identity~(2.4.24),~p.~25]{Slater:Book1960}, i.e., the identity
\begin{equation*}
W_{\varkappa+1,b}(z)
=
\left(\dfrac{z}{2}-\varkappa\right)W_{\varkappa,b}(z)-z\left[\dfrac{\partial}{\partial z}W_{\varkappa,b}(z)\right],
\end{equation*}
and condition~\eqref{eq:lambda-eqn}. Now, putting all of the above together~\eqref{eq:QSD-M_1over2_pm_xi_over2-answer} follows.

To be able use~\eqref{eq:QSD-Ms-gen-answer-2F2} and~\eqref{eq:QSD-M_1over2_pm_xi_over2-answer} to get $C$ in as simple a form as possible, it is necessary to first evaluate~\eqref{eq:QSD-Ms-gen-answer-2F2} for $s=-1/2\pm\xi(\lambda)/2$, and ``massage'' the result so as to bring it to a suitable form. One way to achieve this is through the identity
\begin{equation*}
z(b-a)
{}_{2}F_{2}
\left[
    \setlength{\arraycolsep}{0pt}
    \setlength{\extrarowheight}{2pt}
    \begin{array}{@{} c@{{}{}} @{}}
    a+1,b+1
    \\[1ex]
    c+1,d+1
    \\[5pt]
    \end{array}
    \;\middle|\;
    z
\right]
=
cd
\left(
{}_{2}F_{2}
\left[
    \setlength{\arraycolsep}{0pt}
    \setlength{\extrarowheight}{2pt}
    \begin{array}{@{} c@{{}{}} @{}}
    a+1,b
    \\[1ex]
    c,d
    \\[5pt]
    \end{array}
    \;\middle|\;
    z
\right]
-
{}_{2}F_{2}
\left[
    \setlength{\arraycolsep}{0pt}
    \setlength{\extrarowheight}{2pt}
    \begin{array}{@{} c@{{}{}} @{}}
    a,b+1
    \\[1ex]
    c,d
    \\[5pt]
    \end{array}
    \;\middle|\;
    z
\right]
\right),
\end{equation*}
which is one of the contiguous relations that the ${}_{2}F_{2}[z]$ function is known to satisfy. Since from~\eqref{eq:pFq-function-def} it is easy to see that
\begin{equation*}
{}_{2}F_{2}
\left[
    \setlength{\arraycolsep}{0pt}
    \setlength{\extrarowheight}{2pt}
    \begin{array}{@{} c@{{}{}} @{}}
    0,a_2
    \\[1ex]
    b_1,b_2
    \\[2pt]
    \end{array}
    \;\middle|\;
    z
\right]
=
1
\;\;
\text{and}
\;\;
{}_{2}F_{2}
\left[
    \setlength{\arraycolsep}{0pt}
    \setlength{\extrarowheight}{2pt}
    \begin{array}{@{} c@{{}{}} @{}}
    a_1,a_2
    \\[1ex]
    a_1,b_2
    \\[2pt]
    \end{array}
    \;\middle|\;
    z
\right]
=
{}_{1}F_{1}
\left[
    \setlength{\arraycolsep}{0pt}
    \setlength{\extrarowheight}{2pt}
    \begin{array}{@{} c@{{}{}} @{}}
    a_2
    \\[1ex]
    b_2
    \\[2pt]
    \end{array}
    \;\middle|\;
    z
\right]
\;
\text{for any appropriate $a_1$, $a_2$, $b_1$ and $b_2$},
\end{equation*}
the above contiguous relation leads to
\begin{equation*}
\begin{split}
{}_{2}F_{2}
\left[
    \setlength{\arraycolsep}{0pt}
    \setlength{\extrarowheight}{2pt}
    \begin{array}{@{} c@{{}{}} @{}}
    1,\dfrac{1}{2}\pm\dfrac{\xi(\lambda)}{2}
    \\[1ex]
    2\pm\xi(\lambda),2
    \\[5pt]
    \end{array}
    \;\middle|\;
    \dfrac{2}{A}
\right]
&=
A\,
\dfrac{\xi(\lambda)\pm 1}{\xi(\lambda)\mp 1}
\left(
{}_{1}F_{1}
\left[
    \setlength{\arraycolsep}{0pt}
    \setlength{\extrarowheight}{2pt}
    \begin{array}{@{} c@{{}{}} @{}}
    -\dfrac{1}{2}\pm\dfrac{\xi(\lambda)}{2}
    \\[1ex]
    1\pm\xi(\lambda)
    \\[5pt]
    \end{array}
    \;\middle|\;
    \dfrac{2}{A}
\right]
-
1
\right),
\end{split}
\end{equation*}
so that subsequently from~\eqref{eq:QSD-Ms-answer-2F2} we find
\begin{equation*}
\begin{split}
\mathfrak{M}_{-\tfrac{1}{2}\pm\tfrac{\xi(\lambda)}{2}}
&=
\dfrac{1\mp \xi(\lambda)}{4}\,A^{\tfrac{1}{2}\pm\tfrac{\xi(\lambda)}{2}}
\left(
1
-
{}_{1}F_{1}
\left[
    \setlength{\arraycolsep}{0pt}
    \setlength{\extrarowheight}{2pt}
    \begin{array}{@{} c@{{}{}} @{}}
    -\dfrac{1}{2}\mp\dfrac{\xi(\lambda)}{2}
    \\[1ex]
    1\mp\xi(\lambda)
    \\[5pt]
    \end{array}
    \;\middle|\;
    \dfrac{2}{A}
\right]
\right)
+
\\
&\qquad\qquad\qquad+
C\,2^{-\tfrac{1}{2}\pm\tfrac{\xi(\lambda)}{2}}\,\Gamma(1\mp\xi(\lambda))\left/\Gamma\left(\dfrac{1}{2}\mp \dfrac{\xi(\lambda)}{2}\right)\right.
,
\end{split}
\end{equation*}
whence, in view of~\eqref{eq:QSD-M_1over2_pm_xi_over2-answer}, we finally obtain
\begin{equation}\label{eq:QSD-Ms-C-answer}
C
\equiv
C_{A}
=
\dfrac{1\mp\xi(\lambda)}{2\,\Gamma(1\mp\xi(\lambda))}\left(\dfrac{A}{2}\right)^{\tfrac{1}{2}\pm\tfrac{\xi(\lambda)}{2}}
\Gamma\left(\dfrac{1}{2}\mp\dfrac{\xi(\lambda)}{2}\right)
{}_{1}F_{1}
\left[
    \setlength{\arraycolsep}{0pt}
    \setlength{\extrarowheight}{2pt}
    \begin{array}{@{} c@{{}{}} @{}}
    -\dfrac{1}{2}\mp\dfrac{\xi(\lambda)}{2}
    \\[1ex]
    1\mp\xi(\lambda)
    \\[5pt]
    \end{array}
    \;\middle|\;
    \dfrac{2}{A}
\right]
,
\end{equation}
where again $\lambda\equiv\lambda_{A}>0$ is determined by~\eqref{eq:lambda-eqn} and $\xi(\lambda)$ is defined in~\eqref{eq:xi-def}.

We have obtained not one but {\em two} expressions for $C$, although the difference between the expressions is only in the sign of $\xi(\lambda)$. The two expressions are equivalent, and, in a nutshell, it has to do with condition~\eqref{eq:lambda-eqn} and the symmetry of the Whittaker $W$ function with respect to its second index, i.e., $W_{a,b}(z)=W_{a,-b}(z)$. To give a more detailed explanation we turn to~\cite[Identity~13.1.34,~p.~505]{Abramowitz+Stegun:Handbook1964}, i.e., the identity
\begin{equation}\label{eq:WhitWM-identity}
W_{\varkappa,b}(z)
=
\dfrac{\Gamma(-2b)}{\Gamma(1/2-b-\varkappa)}M_{\varkappa,b}(z)+\dfrac{\Gamma(2b)}{\Gamma(1/2+b-\varkappa)}M_{\varkappa,-b}(z),
\end{equation}
where $M_{a,b}(z)$ denotes the Whittaker $M$ function. On account of~\eqref{eq:lambda-eqn} this identity gives
\begin{equation*}
\dfrac{\Gamma(-\xi(\lambda))}{\Gamma\left(-\dfrac{1}{2}-\dfrac{\xi(\lambda)}{2}\right)}M_{1,\tfrac{\xi(\lambda)}{2}}\left(\dfrac{2}{A}\right)
=
-\dfrac{\Gamma(\xi(\lambda))}{\Gamma\left(-\dfrac{1}{2}+\dfrac{\xi(\lambda)}{2}\right)}M_{1,-\tfrac{\xi(\lambda)}{2}}\left(\dfrac{2}{A}\right),
\end{equation*}
or equivalently
\begin{equation*}
\Gamma\left(\dfrac{1}{2}+\dfrac{\xi(\lambda)}{2}\right)\dfrac{1+\xi(\lambda)}{\Gamma(1+\xi(\lambda))}\,M_{1,\tfrac{\xi(\lambda)}{2}}\left(\dfrac{2}{A}\right)
=
\Gamma\left(\dfrac{1}{2}-\dfrac{\xi(\lambda)}{2}\right)\dfrac{1-\xi(\lambda)}{\Gamma(1-\xi(\lambda))}\,M_{1,-\tfrac{\xi(\lambda)}{2}}\left(\dfrac{2}{A}\right),
\end{equation*}
because, again, $\Gamma(z+1)=z\,\Gamma(z)$ for any $z\in\mathbb{C}$. Next, since the Whittaker $M$ function and the Kummer ${}_{1}F_{1}[z]$ function are related, and the relationship is given by~\cite[Identities~(3a) and~(4),~p.~11]{Buchholz:Book1969}, which together amount to
\begin{equation*}
M_{\varkappa,\pm b}(z)
=
z^{\tfrac{1}{2}\pm b}\,e^{-\tfrac{z}{2}}\,
{}_{1}F_{1}
\left[
    \setlength{\arraycolsep}{0pt}
    \setlength{\extrarowheight}{2pt}
    \begin{array}{@{} c@{{}{}} @{}}
    \dfrac{1}{2}\pm b-\varkappa
    \\[1ex]
    1\pm 2b
    \\[5pt]
    \end{array}
    \;\middle|\;
    z
\right]
,
\end{equation*}
we further obtain
\begin{equation*}
\begin{split}
&
\Gamma\left(\dfrac{1}{2}+\dfrac{\xi(\lambda)}{2}\right)\dfrac{1+\xi(\lambda)}{\Gamma(1+\xi(\lambda))}
\left(\dfrac{2}{A}\right)^{\tfrac{\xi(\lambda)}{2}}
{}_{1}F_{1}
\left[
    \setlength{\arraycolsep}{0pt}
    \setlength{\extrarowheight}{2pt}
    \begin{array}{@{} c@{{}{}} @{}}
    -\dfrac{1}{2}+\dfrac{\xi(\lambda)}{2}
    \\[1ex]
    1+\xi(\lambda)
    \\[5pt]
    \end{array}
    \;\middle|\;
    \dfrac{2}{A}
\right]
=\\
&\qquad\qquad
=
\Gamma\left(\dfrac{1}{2}-\dfrac{\xi(\lambda)}{2}\right)\dfrac{1-\xi(\lambda)}{\Gamma(1-\xi(\lambda))}
\left(\dfrac{2}{A}\right)^{-\tfrac{\xi(\lambda)}{2}}
{}_{1}F_{1}
\left[
    \setlength{\arraycolsep}{0pt}
    \setlength{\extrarowheight}{2pt}
    \begin{array}{@{} c@{{}{}} @{}}
    -\dfrac{1}{2}-\dfrac{\xi(\lambda)}{2}
    \\[1ex]
    1-\xi(\lambda)
    \\[5pt]
    \end{array}
    \;\middle|\;
    \dfrac{2}{A}
\right],
\end{split}
\end{equation*}
and it is now evident that the sign of $\xi(\lambda)$ on the right of~\eqref{eq:QSD-Ms-C-answer} is, indeed, irrelevant.

It is possible to simplify~\eqref{eq:QSD-Ms-C-answer} considerably, with the help of the identity
\begin{equation*}
\lambda\,A\,\Gamma\left(\dfrac{\xi(\lambda)-1}{2}\right)W_{0,\tfrac{\xi(\lambda)}{2}}\left(\dfrac{2}{A}\right)M_{1,\tfrac{\xi(\lambda)}{2}}\left(\dfrac{2}{A}\right)
=
-\Gamma(\xi(\lambda)+1),
\end{equation*}
which was established by~\cite[p.~1373]{Polunchenko+Pepelyshev:SP2018}. In view of this identity and the earlier discussion one can see that
\begin{equation*}
\dfrac{1}{e^{-\tfrac{1}{A}}\,W_{0,\tfrac{\xi(\lambda)}{2}}\left(\dfrac{2}{A}\right)}
=
\Gamma\left(\dfrac{1}{2}+\dfrac{\xi(\lambda)}{2}\right)\dfrac{1+\xi(\lambda)}{2\,\Gamma(1+\xi(\lambda))}
\left(\dfrac{2}{A}\right)^{-\tfrac{1}{2}+\tfrac{\xi(\lambda)}{2}}
{}_{1}F_{1}
\left[
    \setlength{\arraycolsep}{0pt}
    \setlength{\extrarowheight}{2pt}
    \begin{array}{@{} c@{{}{}} @{}}
    -\dfrac{1}{2}+\dfrac{\xi(\lambda)}{2}
    \\[1ex]
    1+\xi(\lambda)
    \\[5pt]
    \end{array}
    \;\middle|\;
    \dfrac{2}{A}
\right]
,
\end{equation*}
whence
\begin{equation*}
C
=
1\left/\left[e^{-\tfrac{1}{A}}\,W_{0,\tfrac{\xi(\lambda)}{2}}\left(\dfrac{2}{A}\right)\right]\right.,
\end{equation*}
i.e., $C$ is precisely the normalizing factor in formulae~\eqref{eq:QSD-pdf-answer}--\eqref{eq:QSD-cdf-answer} for the quasi-stationary pdf $q_{A}(x)$ and cdf $Q_{A}(x)$. Other than being far more compact than~\eqref{eq:QSD-Ms-C-answer}, the new formula for $C$ also makes it clear that the latter is invariant with respect to the sign of $\xi(\lambda)$: the reason is the symmetry of the Whittaker $W$ function with respect to its second index, i.e., $W_{a,b}(z)=W_{a,-b}(z)$. Moreover, since
\begin{equation*}
\lim_{A\to+\infty}\left[e^{-\tfrac{1}{A}}\,W_{0,\tfrac{\xi(\lambda_{A})}{2}}\left(\dfrac{2}{A}\right)\right]
=
1,
\end{equation*}
as was previously shown by~\cite[p.~139]{Polunchenko:SA2016}, it can be readily concluded that $\lim_{A\to+\infty}C_{A}=1$.

We are now ready to state our main result.
\begin{lemma}
For every $A>0$ fixed, the fractional moment $\mathfrak{M}_{s}$ of order $s\in\mathbb{R}$ of the quasi-stationary distribution~\eqref{eq:QSD-pdf-answer}--\eqref{eq:QSD-cdf-answer} is given by
\begin{equation}\label{eq:QSD-Ms-answer-2F2}
\begin{split}
\mathfrak{M}_{s}
\equiv
\mathfrak{M}_{s}(A)
&=
\dfrac{2\lambda\, A^{s}}{s(s-1)+2\lambda}
\,
{}_{2}F_{2}
\left[
    \setlength{\arraycolsep}{0pt}
    \setlength{\extrarowheight}{2pt}
    \begin{array}{@{} c@{{}{}} @{}}
    1,-s
    \\[1ex]
    \dfrac{3}{2}+\dfrac{\xi(\lambda)}{2}-s,\dfrac{3}{2}-\dfrac{\xi(\lambda)}{2}-s
    \\[5pt]
    \end{array}
    \;\middle|\;
    \dfrac{2}{A}
\right]
+
\\
&\qquad\qquad+
C\,\dfrac{2^{s}}{\Gamma(-s)}\,\Gamma\left(\dfrac{1}{2}+\dfrac{\xi(\lambda)}{2}-s\right)\Gamma\left(\dfrac{1}{2}-\dfrac{\xi(\lambda)}{2}-s\right)
,
\;
s\in\mathbb{R},
\end{split}
\end{equation}
where
\begin{equation}\label{eq:QSD-Ms-answer-C-2F2}
\begin{split}
C
\equiv
C_A
&=
\dfrac{1\pm\xi(\lambda)}{2\,\Gamma(1\pm\xi(\lambda))}\left(\dfrac{A}{2}\right)^{\tfrac{1}{2}\mp\tfrac{\xi(\lambda)}{2}}
\Gamma\left(\dfrac{1}{2}\pm\dfrac{\xi(\lambda)}{2}\right)
{}_{1}F_{1}
\left[
    \setlength{\arraycolsep}{0pt}
    \setlength{\extrarowheight}{2pt}
    \begin{array}{@{} c@{{}{}} @{}}
    -\dfrac{1}{2}\pm\dfrac{\xi(\lambda)}{2}
    \\[1ex]
    1\pm\xi(\lambda)
    \\[5pt]
    \end{array}
    \;\middle|\;
    \dfrac{2}{A}
\right]
=
\\
&
=
1\left/\left[e^{-\tfrac{1}{A}}\,W_{0,\pm\tfrac{\xi(\lambda)}{2}}\left(\dfrac{2}{A}\right)\right]\right.
\end{split}
\end{equation}
with $\lambda\equiv\lambda_A\;(>0)$ determined by~\eqref{eq:lambda-eqn} and $\xi(\lambda)$ defined in~\eqref{eq:xi-def}; recall also that ${}_{2}F_{2}[z]$ denotes the corresponding special case of the generalized hypergeometric function~\eqref{eq:pFq-function-def}.
\end{lemma}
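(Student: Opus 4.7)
My proof plan is essentially to assemble the three ingredients that have already been laid out in the discussion preceding the statement. The starting point is the general solution formula from the previous lemma, which expresses $\mathfrak{M}_{s}$ as a particular solution (the ${}_{2}F_{2}$ term) plus an arbitrary constant $C$ multiplied by the general solution of the homogeneous recurrence. My task reduces to \emph{pinning down} $C$ and then verifying the equivalent forms claimed in~\eqref{eq:QSD-Ms-answer-C-2F2}.

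The first step is to identify a value of $s$ at which $\mathfrak{M}_{s}$ can be evaluated by an independent route. The natural candidate $s=0$ fails because $\Gamma(-s)$ has a simple pole there and the $C$-term collapses to zero irrespective of $C$; the same obstruction occurs at every $s\in\mathbb{N}\cup\{0\}$, which is why the homogeneous-solution gap went unnoticed in the integer-order treatment of~\cite{Polunchenko+Pepelyshev:SP2018}. The circumvention, already noted in the excerpt, is to choose $s=-\tfrac{1}{2}\pm\tfrac{\xi(\lambda)}{2}$: these are precisely the two roots of $\tfrac{s(s-1)}{2}+\lambda=0$, so the recurrence~\eqref{eq:QSD-frac-moments-recurrence} instantly yields the closed form~\eqref{eq:QSD-M_1over2_pm_xi_over2-answer} for $\mathfrak{M}_{s}$ at those points. (Alternatively, as shown in the excerpt, one can obtain the same result by direct integration of $x^{s}q_{A}(x)$ using the Prudnikov--Brychkov--Marichev integral identity for $W_{\varkappa,b}$ and the identities $W_{\varkappa+1/2,b\pm 1/2}$ and $W_{\varkappa+1,b}$ from Slater, combined with condition~\eqref{eq:lambda-eqn}.)

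The next step is to substitute $s=-\tfrac{1}{2}\pm\tfrac{\xi(\lambda)}{2}$ into the general solution~\eqref{eq:QSD-Ms-gen-answer-2F2} and simplify. Here the ${}_{2}F_{2}$ function evaluates at special parameters where one upper parameter equals $\tfrac{1}{2}\pm\tfrac{\xi(\lambda)}{2}$ and one lower parameter equals $2\pm\xi(\lambda)$; using the contiguous relation for ${}_{2}F_{2}$ quoted in the excerpt together with the reductions ${}_{2}F_{2}[0,a_{2};\dots]=1$ and ${}_{2}F_{2}[a_{1},a_{2};a_{1},b_{2};\cdot]={}_{1}F_{1}[a_{2};b_{2};\cdot]$, the ${}_{2}F_{2}$ is expressed as a ${}_{1}F_{1}$. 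Matching the resulting expression against~\eqref{eq:QSD-M_1over2_pm_xi_over2-answer} and solving for $C$ yields~\eqref{eq:QSD-Ms-C-answer}, the first equality in~\eqref{eq:QSD-Ms-answer-C-2F2}. The fact that either sign of $\xi(\lambda)$ gives the same $C$ is a consequence of the Kummer--Whittaker linkage~\eqref{eq:WhitWM-identity} applied at $z=2/A$ under the side condition~\eqref{eq:lambda-eqn}, which forces a ${}_{1}F_{1}$ symmetry across $\xi(\lambda)\mapsto -\xi(\lambda)$, matching Remark~\ref{rem:xi-symmetry}.

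Finally, to derive the compact second form $C=1/[e^{-1/A}W_{0,\xi(\lambda)/2}(2/A)]$, I would invoke the identity $\lambda A\,\Gamma((\xi(\lambda)-1)/2)\,W_{0,\xi(\lambda)/2}(2/A)\,M_{1,\xi(\lambda)/2}(2/A)=-\Gamma(\xi(\lambda)+1)$ from~\cite{Polunchenko+Pepelyshev:SP2018}, express $M_{1,\xi(\lambda)/2}$ in terms of the corresponding ${}_{1}F_{1}$, and perform straightforward Gamma-function bookkeeping via $\Gamma(z+1)=z\Gamma(z)$; this converts the first form into the claimed Whittaker-$W$ form. The main obstacle of the whole argument is purely algebraic bookkeeping: tracking signs of $\xi(\lambda)$, keeping the Pochhammer-to-Gamma conversions consistent, and ensuring that each of the contiguous/reduction identities is applied with parameters lying outside the non-positive integers so that no hidden singularity is invoked. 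Once the Whittaker-$W$ form of $C$ is in hand, substituting it back into~\eqref{eq:QSD-Ms-gen-answer-2F2} yields~\eqref{eq:QSD-Ms-answer-2F2}, completing the proof.
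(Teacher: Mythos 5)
Your proposal is correct and follows essentially the same route as the paper: pin down $C$ by evaluating the general solution at $s=-\tfrac{1}{2}\pm\tfrac{\xi(\lambda)}{2}$ (where the recurrence degenerates and gives $\mathfrak{M}_{s}$ for free), reduce the ${}_{2}F_{2}$ to a ${}_{1}F_{1}$ via the contiguous relation, match against~\eqref{eq:QSD-M_1over2_pm_xi_over2-answer}, and then compress $C$ to the Whittaker-$W$ form using the $\Gamma((\xi-1)/2)\,W_{0,\xi/2}\,M_{1,\xi/2}$ identity of~\cite{Polunchenko+Pepelyshev:SP2018}. The sign-invariance argument via~\eqref{eq:WhitWM-identity} under condition~\eqref{eq:lambda-eqn} also matches the paper's treatment.
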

The obtained $s$-th order fractional moment formula~\eqref{eq:QSD-Ms-answer-2F2} is valid even if $s=1/2\pm\xi(\lambda)/2+k$ for $k\in\mathbb{N}\cup\{0\}$, so long as it is used with care: the first term on the right of~\eqref{eq:QSD-Ms-answer-2F2} experiences a singularity at $s=1/2\pm\xi(\lambda)/2+k$, and one of the two Gamma functions in the second term on the right of~\eqref{eq:QSD-Ms-answer-2F2} experiences a (simple) pole whenever $s=1/2\pm\xi(\lambda)/2+k$. To better understand how to use formula~\eqref{eq:QSD-Ms-answer-2F2} for $s=1/2\pm\xi(\lambda)/2+k$, suppose, for simplicity, that $k=0$, and consider $s=s_{\varepsilon}=1/2\pm\xi(\lambda)/2+\varepsilon$ for some sufficiently small $\varepsilon\neq 0$. For this choice of $s$ from~\eqref{eq:QSD-Ms-answer-2F2} and~\eqref{eq:QSD-Ms-answer-C-2F2} and repeated use of $\Gamma(1+z)=z\,\Gamma(z)$, $z\in\mathbb{C}$, we have
\begin{equation*}
\begin{split}
&\mathfrak{M}_{\tfrac{1}{2}\pm\tfrac{\xi(\lambda)}{2}+\varepsilon}
=
2\lambda\, A^{\tfrac{1}{2}\pm\tfrac{\xi(\lambda)}{2}}\Bigggg\{\dfrac{A^{\varepsilon}}{\varepsilon(\varepsilon\pm\xi(\lambda))}\,
{}_{2}F_{2}
\left[
    \setlength{\arraycolsep}{0pt}
    \setlength{\extrarowheight}{2pt}
    \begin{array}{@{} c@{{}{}} @{}}
    1,-\dfrac{1}{2}\mp\dfrac{\xi(\lambda)}{2}-\varepsilon
    \\[1ex]
    1-\varepsilon,1\mp\xi(\lambda)-\varepsilon
    \\[5pt]
    \end{array}
    \;\middle|\;
    \dfrac{2}{A}
\right]
+
\\
&
\quad+
\dfrac{2^{\varepsilon}\,\Gamma(1-\varepsilon)\,\Gamma(\mp\xi(\lambda)-\varepsilon)}{\varepsilon\,\Gamma(1\mp\xi(\lambda))}\,\Gamma\left(-\dfrac{1}{2}\mp\dfrac{\xi(\lambda)}{2}\right)
\left.
{}_{1}F_{1}
\left[
    \setlength{\arraycolsep}{0pt}
    \setlength{\extrarowheight}{2pt}
    \begin{array}{@{} c@{{}{}} @{}}
    -\dfrac{1}{2}\mp\dfrac{\xi(\lambda)}{2}
    \\[1ex]
    1\mp\xi(\lambda)
    \\[5pt]
    \end{array}
    \;\middle|\;
    \dfrac{2}{A}
\right]
\right/
\Gamma\left(-\dfrac{1}{2}\mp\dfrac{\xi(\lambda)}{2}-\varepsilon\right)
\Bigggg\}
,
\end{split}
\end{equation*}
so that each of the two terms inside the braces can be seen to have a singularity of type $1/\varepsilon$ as $\varepsilon\to0$. However, these singularities ``undo'' each other in such a way that
\begin{equation*}
\mathfrak{M}_{\tfrac{1}{2}\pm\tfrac{\xi(\lambda)}{2}}
=
\lim_{\varepsilon\to0}\mathfrak{M}_{\tfrac{1}{2}\pm\tfrac{\xi(\lambda)}{2}+\varepsilon}
\end{equation*}
ends up being well-defined. To show this and properly handle the singularities observe first that
\begin{equation*}
\begin{split}
&\mathfrak{M}_{\tfrac{1}{2}\pm\tfrac{\xi(\lambda)}{2}+\varepsilon}
=
\dfrac{2\lambda}{\varepsilon}\, A^{\tfrac{1}{2}\pm\tfrac{\xi(\lambda)}{2}}\,2^{\varepsilon}\left[\Gamma(-\varepsilon\mp\xi(\lambda))\,\Gamma(1-\varepsilon)\left/
\Gamma\left(-\dfrac{1}{2}\mp\dfrac{\xi(\lambda)}{2}-\varepsilon\right)\right.\right]
\times
\\
&\qquad\qquad
\times\Bigggg\{\dfrac{1}{\Gamma(1\mp\xi(\lambda))}\,\Gamma\left(-\dfrac{1}{2}\mp\dfrac{\xi(\lambda)}{2}\right)
{}_{1}F_{1}
\left[
    \setlength{\arraycolsep}{0pt}
    \setlength{\extrarowheight}{2pt}
    \begin{array}{@{} c@{{}{}} @{}}
    -\dfrac{1}{2}\mp\dfrac{\xi(\lambda)}{2}
    \\[1ex]
    1\mp\xi(\lambda)
    \\[5pt]
    \end{array}
    \;\middle|\;
    \dfrac{2}{A}
\right]
-
\\
&\quad
-
\dfrac{\left(A/2\right)^{\varepsilon}}{\Gamma(1-\varepsilon)\,\Gamma(1\mp\xi(\lambda)-\varepsilon)}\,
\Gamma\left(-\dfrac{1}{2}\mp\dfrac{\xi(\lambda)}{2}-\varepsilon\right)
\,
{}_{2}F_{2}
\left[
    \setlength{\arraycolsep}{0pt}
    \setlength{\extrarowheight}{2pt}
    \begin{array}{@{} c@{{}{}} @{}}
    1,-\dfrac{1}{2}\mp\dfrac{\xi(\lambda)}{2}-\varepsilon
    \\[1ex]
    1-\varepsilon,1\mp\xi(\lambda)-\varepsilon
    \\[5pt]
    \end{array}
    \;\middle|\;
    \dfrac{2}{A}
\right]
\Bigggg\}
,
\end{split}
\end{equation*}
whence on account of
\begin{equation*}
\begin{split}
&
\dfrac{\left(A/2\right)^{\varepsilon}}{\Gamma(1-\varepsilon)\,\Gamma(1\mp\xi(\lambda)-\varepsilon)}\,
\Gamma\left(-\dfrac{1}{2}\mp\dfrac{\xi(\lambda)}{2}-\varepsilon\right)
\,
{}_{2}F_{2}
\left[
    \setlength{\arraycolsep}{0pt}
    \setlength{\extrarowheight}{2pt}
    \begin{array}{@{} c@{{}{}} @{}}
    1,-\dfrac{1}{2}\mp\dfrac{\xi(\lambda)}{2}-\varepsilon
    \\[1ex]
    1-\varepsilon,1\mp\xi(\lambda)-\varepsilon
    \\[5pt]
    \end{array}
    \;\middle|\;
    \dfrac{2}{A}
\right]
=
\\
&\qquad\qquad
=
\dfrac{1}{\Gamma(1\mp\xi(\lambda))}\,\Gamma\left(-\dfrac{1}{2}\mp\dfrac{\xi(\lambda)}{2}\right)
{}_{1}F_{1}
\left[
    \setlength{\arraycolsep}{0pt}
    \setlength{\extrarowheight}{2pt}
    \begin{array}{@{} c@{{}{}} @{}}
    -\dfrac{1}{2}\mp\dfrac{\xi(\lambda)}{2}
    \\[1ex]
    1\mp\xi(\lambda)
    \\[5pt]
    \end{array}
    \;\middle|\;
    \dfrac{2}{A}
\right]
+
\\
&
+
\dfrac{\partial }{\partial \delta}\left.\Bigggg\{
\dfrac{\left(A/2\right)^{\delta}}{\Gamma(1-\delta)\,\Gamma(1\mp\xi(\lambda)-\delta)}\,
\Gamma\left(-\dfrac{1}{2}\mp\dfrac{\xi(\lambda)}{2}-\delta\right)
\,
{}_{2}F_{2}
\left[
    \setlength{\arraycolsep}{0pt}
    \setlength{\extrarowheight}{2pt}
    \begin{array}{@{} c@{{}{}} @{}}
    1,-\dfrac{1}{2}\mp\dfrac{\xi(\lambda)}{2}-\delta
    \\[1ex]
    1-\delta,1\mp\xi(\lambda)-\delta
    \\[5pt]
    \end{array}
    \;\middle|\;
    \dfrac{2}{A}
\right]
\Bigggg\}\right|_{\delta=\delta(\varepsilon)}
\varepsilon,
\end{split}
\end{equation*}
and recalling~\eqref{eq:PochhammerGamma-def} and~\eqref{eq:pFq-function-def} one can conclude that
\begin{equation}\label{eq:M_1_over_2_pm_xi_over_2-answer}
\begin{split}
&\mathfrak{M}_{\tfrac{1}{2}\pm\tfrac{\xi(\lambda)}{2}}
\equiv
\mathfrak{M}_{\tfrac{1}{2}\pm\tfrac{\xi(\lambda)}{2}}(A)
=
\lim_{\varepsilon\to0}\mathfrak{M}_{\tfrac{1}{2}\pm\tfrac{\xi(\lambda)}{2}+\varepsilon}
=
-2\lambda\, A^{\tfrac{1}{2}\pm\tfrac{\xi(\lambda)}{2}}\left[\Gamma(\mp\xi(\lambda))\left/
\Gamma\left(-\dfrac{1}{2}\mp\dfrac{\xi(\lambda)}{2}\right)\right.\right]
\times
\\
&\quad
\times
\dfrac{\partial }{\partial \delta}\left.\Bigggg\{
\dfrac{\left(A/2\right)^{\delta}}{\Gamma(1-\delta)\,\Gamma(1\mp\xi(\lambda)-\delta)}\,
\Gamma\left(-\dfrac{1}{2}\mp\dfrac{\xi(\lambda)}{2}-\delta\right)
\,
{}_{2}F_{2}
\left[
    \setlength{\arraycolsep}{0pt}
    \setlength{\extrarowheight}{2pt}
    \begin{array}{@{} c@{{}{}} @{}}
    1,-\dfrac{1}{2}\mp\dfrac{\xi(\lambda)}{2}-\delta
    \\[1ex]
    1-\delta,1\mp\xi(\lambda)-\delta
    \\[5pt]
    \end{array}
    \;\middle|\;
    \dfrac{2}{A}
\right]
\Bigggg\}\right|_{\delta=0}
=
\\
&\quad
=
-2\lambda\, A^{\tfrac{1}{2}\pm\tfrac{\xi(\lambda)}{2}}\left[\Gamma(\mp\xi(\lambda))\left/
\Gamma\left(-\dfrac{1}{2}\mp\dfrac{\xi(\lambda)}{2}\right)\right.\right]
\times
\\
&\qquad\qquad
\times
\sum_{j=0}^{\infty}
\Biggg\{
\Gamma\left(-\dfrac{1}{2}\mp\dfrac{\xi(\lambda)}{2}+j\right)
\dfrac{(2/A)^j}{j!\,\Gamma(1\mp\xi(\lambda))}\Bigg[\psi(1+j)+\psi(1\mp\xi(\lambda)+j)
-
\\
&\qquad\qquad\qquad\qquad
-
\psi\left(-\dfrac{1}{2}\mp\dfrac{\xi(\lambda)}{2}+j\right)-\log\left(\dfrac{2}{A}\right)\Bigg]
\Biggg\}
=
\\
&\quad
=
\pm\dfrac{2\lambda}{\xi(\lambda)}\, A^{\tfrac{1}{2}\pm\tfrac{\xi(\lambda)}{2}}
\sum_{j=0}^{\infty}
\Biggg\{
\left(-\dfrac{1}{2}\mp\dfrac{\xi(\lambda)}{2}\right)_{j}
\dfrac{(2/A)^j}{j!\,(1\mp\xi(\lambda))_{j}}\Bigg[\psi(1+j)+\psi(1\mp\xi(\lambda)+j)
-
\\
&\qquad\qquad\qquad\qquad
-
\psi\left(-\dfrac{1}{2}\mp\dfrac{\xi(\lambda)}{2}+j\right)-\log\left(\dfrac{2}{A}\right)\Bigg]
\Biggg\},
\end{split}
\end{equation}
where $\psi(z)$ denotes the digamma function~\eqref{eq:digamma-fun-def}; recall again that $\lambda\equiv\lambda_{A}>0$ is determined by equation~\eqref{eq:lambda-eqn}.

The case of $s=1/2\pm\xi(\lambda)/2+k$, $k\in\mathbb{N}$, can be treated in a similar manner, leading to
\begin{equation*}
\begin{split}
&\mathfrak{M}_{\tfrac{1}{2}\pm\tfrac{\xi(\lambda)}{2}+k}
\equiv
\mathfrak{M}_{\tfrac{1}{2}\pm\tfrac{\xi(\lambda)}{2}+k}(A)
=
\dfrac{(-2)^{k}}{k!\,\Gamma(1\pm\xi(\lambda)+k)}\,\Gamma\left(\dfrac{3}{2}\pm\dfrac{\xi(\lambda)}{2}+k\right)\,\times
\\
&\qquad\qquad
\times\Biggg\{
\Gamma(1\pm\xi(\lambda))\,\mathfrak{M}_{\tfrac{1}{2}\pm\tfrac{\xi(\lambda)}{2}}(A)\left/\Gamma\left(\dfrac{3}{2}\pm\dfrac{\xi(\lambda)}{2}\right)\right.
-
\\
&\qquad\qquad\qquad
-
\lambda\,A^{\tfrac{3}{2}\pm\tfrac{\xi(\lambda)}{2}}\,\sum_{j=0}^{k-1}\Bigg[\left(-\dfrac{A}{2}\right)^{j}\,j!\,\Gamma(1\pm\xi(\lambda)+j)\left/\Gamma\left(\dfrac{5}{2}\pm\dfrac{\xi(\lambda)}{2}+j\right)\right.\Bigg]
\Biggg\}
=
\\
&\qquad
=
\dfrac{(-2)^{k}}{k!\,(1\pm\xi(\lambda))_{k}}\,\left(\dfrac{3}{2}\pm\dfrac{\xi(\lambda)}{2}\right)_{k}\,\Biggg\{
\mathfrak{M}_{\tfrac{1}{2}\pm\tfrac{\xi(\lambda)}{2}}(A)
-
\\
&\qquad\qquad\qquad
-
\dfrac{2\lambda}{3\pm\xi(\lambda)}\,A^{\tfrac{3}{2}\pm\tfrac{\xi(\lambda)}{2}}\,\sum_{j=0}^{k-1}\Bigg[\left(-\dfrac{A}{2}\right)^{j}\,j!\,(1\pm\xi(\lambda))_{j}\left/\left(\dfrac{5}{2}\pm\dfrac{\xi(\lambda)}{2}\right)_{j}\right.\Bigg]
\Biggg\}
,
\;\;
k\in\mathbb{N},
\end{split}
\end{equation*}
where $\mathfrak{M}_{\tfrac{1}{2}\pm\tfrac{\xi(\lambda)}{2}}\equiv\mathfrak{M}_{\tfrac{1}{2}\pm\tfrac{\xi(\lambda)}{2}}(A)$ is as in~\eqref{eq:M_1_over_2_pm_xi_over_2-answer}.

\section{Concluding remarks}
\label{sec:conclusion}
It was pointed out in the introduction that the quasi-stationary distribution~\eqref{eq:QSD-def} of the Shiryaev process~\eqref{eq:Rt_r-def} converges (weakly) to the process' stationary distribution~\eqref{eq:SR-StDist-def}, as $A\to+\infty$; see \cite{Pollak+Siegmund:B85,Pollak+Siegmund:JAP1996} and~\cite{Li+Polunchenko:SA2019}. Since the stationary distribution---explicitly given by~\eqref{eq:SR-StDist-answer}---has fractional moments of orders $s<1$ only, it stands to reason that, unless $s<1$, the $s$-th fractional moment of the quasi-stationary distribution~\eqref{eq:QSD-pdf-answer}--\eqref{eq:QSD-cdf-answer} is divergent as $A\to+\infty$. This can be verified by computing the corresponding limit explicitly through the $s$-th fractional moment formula~\eqref{eq:QSD-Ms-answer-2F2}. Specifically, recalling~\eqref{eq:lambda-dbl-ineq-asymp} whereby $\lim_{A\to+\infty}\big(A\,\lambda_{A}^{1+\delta}\big)=0$ for any $\delta>0$ but $\lim_{A\to+\infty}\lambda_A=0$, so that $\lim_{A\to+\infty}\xi(\lambda_{A})=1$ owing to~\eqref{eq:xi-def}, we find
\begin{equation*}
\lim_{A\to+\infty}
\left\{
{}_{2}F_{2}
\left[
    \setlength{\arraycolsep}{0pt}
    \setlength{\extrarowheight}{2pt}
    \begin{array}{@{} c@{{}{}} @{}}
    1,-s
    \\[1ex]
    \dfrac{3}{2}+\dfrac{\xi(\lambda)}{2}-s,\dfrac{3}{2}-\dfrac{\xi(\lambda)}{2}-s
    \\[5pt]
    \end{array}
    \;\middle|\;
    \dfrac{2}{A}
\right]
\right\}
=
{}_{2}F_{2}
\left[
    \setlength{\arraycolsep}{0pt}
    \setlength{\extrarowheight}{2pt}
    \begin{array}{@{} c@{{}{}} @{}}
    1,-s
    \\[1ex]
    2-s,1-s
    \\[5pt]
    \end{array}
    \;\middle|\;
    0
\right]
=
1,
\;
\text{for any}
\;
s<1,
\end{equation*}
where the second equality follows directly from the definition~\eqref{eq:pFq-function-def} of the generalized hypergeometric function. Hence, from the $s$-th fractional moment formula~\eqref{eq:QSD-Ms-answer-2F2} we obtain
\begin{equation}
\lim_{A\to+\infty}\mathfrak{M}_{s}(A)
=
2^{s}\,\Gamma(1-s),
\;
\text{for any}
\;
s<1,
\end{equation}
where we also used $\lim_{A\to+\infty}C_{A}=1$ mentioned above. The obtained result is precisely the fractional moment of order $s<1$ of the stationary distribution~\eqref{eq:SR-StDist-answer}, because of~\cite[Identity~3.381.4,~p.~348]{Gradshteyn+Ryzhik:Book2014} which states that
\begin{equation*}
\bigintsss_{0}^{+\infty}
x^{a-1}\,e^{-bx}\,dx
=
\dfrac{\Gamma(a)}{b^{a}},
\end{equation*}
provided $\Re(a)>0$ and $\Re(b)>0$.

As a final comment, we note that the $s$-th fractional moment formula~\eqref{eq:QSD-Ms-answer-2F2} can also be used to compute the $\log$-moment of the quasi-stationary distribution. Specifically, if $X$ is, again, a random variable having the quasi-stationary distribution $q_{A}(x)$ given by~\eqref{eq:QSD-pdf-answer}, then
\begin{equation*}
\EV\big[\log(X)\big]
=
\log(A)-\dfrac{1}{\lambda_{A}}\left(\mathfrak{M}_{-1}(A)-\dfrac{1}{2}\right),
\end{equation*}
where $\mathfrak{M}_{-1}(A)$ is as in~\eqref{eq:QSD-Ms-answer-2F2} and $\lambda_{A}$ is determined by~\eqref{eq:lambda-eqn}.

\section*{Acknowledgements}
The authors are thankful to the two anonymous referees whose constructive feedback helped improve the quality of the manuscript.

The effort of A.S.~Polunchenko was supported, in part, by the Simons Foundation via a Collaboration Grant in Mathematics under Award \#\,304574.

The work of A. Pepelyshev was partially supported by the Russian Foundation for Basic Research under Project \#\,17-01-00161.
%

\singlespacing
\bibliographystyle{abbrvnat}      
\bibliography{main,physics,special-functions,finance,stochastic-processes,differential-equations}

\begin{thebibliography}{38}
\providecommand{\natexlab}[1]{#1}
\providecommand{\url}[1]{\texttt{#1}}
\expandafter\ifx\csname urlstyle\endcsname\relax
  \providecommand{\doi}[1]{doi: #1}\else
  \providecommand{\doi}{doi: \begingroup \urlstyle{rm}\Url}\fi

\bibitem[Abramowitz and Stegun(1964)]{Abramowitz+Stegun:Handbook1964}
M.~Abramowitz and I.~Stegun, editors.
\newblock \emph{Handbook of Mathematical Functions with Formulas, Graphs, and
  Mathematical Tables}, volume~55 of \emph{Applied Mathematics Series}.
\newblock United States National Bureau of Standards, tenth edition, 1964.

\bibitem[Avram et~al.(2013)Avram, Leonenko, and \v{S}uvak]{Avram+etal:MPRF2013}
F.~Avram, N.~N. Leonenko, and N.~\v{S}uvak.
\newblock On spectral analysis of heavy-tailed {K}olmogorov--{P}earson
  diffusions.
\newblock \emph{Markov Processes and Related Fields}, 19\penalty0 (2):\penalty0
  249--298, 2013.

\bibitem[Bateman and Erd\'{e}lyi(1953)]{Bateman+Erdelyi:Book1953v1}
H.~Bateman and A.~Erd\'{e}lyi.
\newblock \emph{Higher Transcendental Functions}, volume~1.
\newblock McGraw-Hill, New York, NY, 1953.

\bibitem[Buchholz(1969)]{Buchholz:Book1969}
H.~Buchholz.
\newblock \emph{The Confluent Hypergeometric Function}, volume~15 of
  \emph{Springer Tracts in Natural Philosophy}.
\newblock Springer-Verlag, New York, NY, 1969.
\newblock Translated from German by H. Lichtblau and K. Wetzel.

\bibitem[Burnaev et~al.(2009)Burnaev, Feinberg, and
  Shiryaev]{Burnaev+etal:TPA2009}
E.~V. Burnaev, E.~A. Feinberg, and A.~N. Shiryaev.
\newblock On asymptotic optimality of the second order in the minimax quickest
  detection problem of drift change for {B}rownian motion.
\newblock \emph{Theory of Probability and Its Applications}, 53\penalty0
  (3):\penalty0 519--536, 2009.
\newblock \doi{10.1137/S0040585X97983791}.

\bibitem[Collet et~al.(2013)Collet, Mart\'{i}nez, and
  San~Mart\'{i}n]{Collet+etal:Book2013}
P.~Collet, S.~Mart\'{i}nez, and J.~San~Mart\'{i}n.
\newblock \emph{Quasi-Stationary Distributions {M}arkov Chains, Diffusions and
  Dynamical Systems}.
\newblock Probability and Its Applications. Springer, New York, NY, 2013.

\bibitem[Comtet and Monthus(1996)]{Comtet+Monthus:JPhA1996}
A.~Comtet and C.~Monthus.
\newblock Diffusion in a one-dimensional random medium and hyperbolic
  {B}rownian motion.
\newblock \emph{Journal of Physics A: Mathematical and General}, 29\penalty0
  (7):\penalty0 1331--1345, 1996.
\newblock \doi{10.1088/0305-4470/29/7/006}.

\bibitem[Donati-Martin et~al.(2001)Donati-Martin, Ghomrasni, and
  Yor]{Donati-Martin+etal:RMI2001}
C.~Donati-Martin, R.~Ghomrasni, and M.~Yor.
\newblock On certain {M}arkov processes attached to exponential functionals of
  {B}rownian motion: {A}pplication to {A}sian options.
\newblock \emph{Revista Matem\'{a}tica Iberoamericana}, 17\penalty0
  (1):\penalty0 179--193, 2001.
\newblock \doi{10.4171/RMI/292}.

\bibitem[Dufresne(2001)]{Dufresne:AAP2001}
D.~Dufresne.
\newblock The integral of geometric {B}rownian motion.
\newblock \emph{Advances in Applied Probability}, 33\penalty0 (1):\penalty0
  223--241, 2001.
\newblock \doi{10.1239/aap/999187905}.

\bibitem[Feinberg and Shiryaev(2006)]{Feinberg+Shiryaev:SD2006}
E.~A. Feinberg and A.~N. Shiryaev.
\newblock Quickest detection of drift change for {B}rownian motion in
  generalized {B}ayesian and minimax settings.
\newblock \emph{Statistics \& Decisions}, 24\penalty0 (4):\penalty0 445--470,
  2006.
\newblock \doi{10.1524/stnd.2006.24.4.445}.

\bibitem[Geman and Yor(1993)]{Geman+Yor:MF1993}
H.~Geman and M.~Yor.
\newblock {B}essel processes, {A}sian options, and perpetuities.
\newblock \emph{Mathematical Finance}, 3\penalty0 (4):\penalty0 349--375,
  October 1993.
\newblock \doi{10.1111/j.1467-9965.1993.tb00092.x}.

\bibitem[Gradshteyn and Ryzhik(2014)]{Gradshteyn+Ryzhik:Book2014}
I.~S. Gradshteyn and I.~M. Ryzhik.
\newblock \emph{Table of Integrals, Series, and Products}.
\newblock Academic Press, eighth edition, 2014.

\bibitem[Kummer(1836)]{Kummer:FAM1836}
E.~E. Kummer.
\newblock \"{U}ber die hypergeometrische {R}eihe ${F}(a;b;x)$.
\newblock \emph{Journal f\"{u}r die reine und angewandte Mathematik},
  15:\penalty0 39--83, 1836.
\newblock (in German).

\bibitem[Li and Polunchenko(2019)]{Li+Polunchenko:SA2019}
K.~Li and A.~S. Polunchenko.
\newblock On the rate of convergence of the quasi- to stationary distribution
  for the {S}hiryaev--{R}oberts diffusion.
\newblock \emph{Sequential Analysis}, 2019.
\newblock (submitted).

\bibitem[Linetsky(2004)]{Linetsky:OR2004}
V.~Linetsky.
\newblock Spectral expansions for {A}sian (average price) options.
\newblock \emph{Operations Research}, 52\penalty0 (6):\penalty0 856--867,
  November 2004.
\newblock \doi{10.1287/opre.1040.0113}.

\bibitem[Mandl(1961)]{Mandl:CMJ1961}
P.~Mandl.
\newblock Spectral theory of semi-groups connected with diffusion processes and
  its application.
\newblock \emph{Czechoslovak Mathematical Journal}, 11\penalty0 (4):\penalty0
  558--569, 1961.

\bibitem[Monthus and Comtet(1994)]{Monthus+Comtet:JPhIF1994}
C.~Monthus and A.~Comtet.
\newblock On the flux distribution in a one dimensional disordered system.
\newblock \emph{Journal de Physique I: France}, 4\penalty0 (5):\penalty0
  635--653, May 1994.
\newblock \doi{10.1051/jp1:1994167}.

\bibitem[Peskir(2006)]{Peskir:Shiryaev2006}
G.~Peskir.
\newblock On the fundamental solution of the {K}olmogorov--{S}hiryaev equation.
\newblock In Y.~Kabanov, R.~Liptser, and J.~Stoyanov, editors, \emph{From
  Stochastic Calculus to Mathematical Finance: The {S}hiryaev Festschrift},
  pages 535--546. Springer, Berlin, 2006.
\newblock \doi{10.1007/978-3-540-30788-4_26}.

\bibitem[Pollak and Siegmund(1985)]{Pollak+Siegmund:B85}
M.~Pollak and D.~Siegmund.
\newblock A diffusion process and its applications to detecting a change in the
  drift of {B}rownian motion.
\newblock \emph{Biometrika}, 72\penalty0 (2):\penalty0 267--280, 1985.
\newblock \doi{10.1093/biomet/72.2.267}.

\bibitem[Pollak and Siegmund(1986)]{Pollak+Siegmund:JAP1996}
M.~Pollak and D.~Siegmund.
\newblock Convergence of quasi-stationary to stationary distributions for
  stochastically monotone {M}arkov processes.
\newblock \emph{Journal of Applied Probability}, 23\penalty0 (1):\penalty0
  215--220, May 1986.
\newblock \doi{10.2307/3214131}.

\bibitem[Polunchenko(2016)]{Polunchenko:SA2016}
A.~S. Polunchenko.
\newblock Exact distribution of the {G}eneralized {S}hiryaev--{R}oberts
  stopping time under the minimax {B}rownian motion setup.
\newblock \emph{Sequential Analysis}, 35\penalty0 (1):\penalty0 108--143, 2016.
\newblock \doi{10.1080/07474946.2016.1132066}.

\bibitem[Polunchenko(2017{\natexlab{a}})]{Polunchenko:SA2017a}
A.~S. Polunchenko.
\newblock On the quasi-stationary distribution of the {S}hiryaev--{R}oberts
  diffusion.
\newblock \emph{Sequential Analysis}, 36\penalty0 (1):\penalty0 126--149,
  2017{\natexlab{a}}.
\newblock \doi{10.1080/07474946.2016.1275512}.

\bibitem[Polunchenko(2017{\natexlab{b}})]{Polunchenko:SA2017b}
A.~S. Polunchenko.
\newblock Asymptotic exponentiality of the first exit time of the
  {S}hiryaev--{R}oberts diffusion with constant positive drift.
\newblock \emph{Sequential Analysis}, 36\penalty0 (3):\penalty0 370--383,
  2017{\natexlab{b}}.
\newblock \doi{10.1080/07474946.2017.1360089}.

\bibitem[Polunchenko(2017{\natexlab{c}})]{Polunchenko:TPA2017}
A.~S. Polunchenko.
\newblock Asymptotic near-minimaxity of the randomized
  {S}hiryaev--{R}oberts--{P}ollak change-point detection procedure in
  continuous time.
\newblock \emph{Teoriya Veroyatnostei i ee Primeneniya}, 64\penalty0
  (4):\penalty0 769--786, 2017{\natexlab{c}}.
\newblock \doi{10.4213/tvp5142}.
\newblock English version published by SIAM in Theory of Probability and Its
  Applications, vol.62, no.~4, pp.~617--631, August 2018; doi:
  10.1137/S0040585X97T988848.

\bibitem[Polunchenko and Pepelyshev(2018)]{Polunchenko+Pepelyshev:SP2018}
A.~S. Polunchenko and A.~Pepelyshev.
\newblock Analytic moment and {L}aplace transform formulae for the
  quasi-stationary distribution of the {S}hiryaev diffusion on an interval.
\newblock \emph{Statistical Papers}, 59\penalty0 (4):\penalty0 1351--1377,
  2018.
\newblock \doi{10.1007/s00362-018-1019-8}.

\bibitem[Polunchenko and Sokolov(2016)]{Polunchenko+Sokolov:MCAP2016}
A.~S. Polunchenko and G.~Sokolov.
\newblock An analytic expression for the distribution of the {G}eneralized
  {S}hiryaev--{R}oberts diffusion: The {F}ourier spectral expansion approach.
\newblock \emph{Methodology and Computing in Applied Probability}, 18\penalty0
  (4):\penalty0 1153--1195, December 2016.
\newblock \doi{10.1007/s11009-016-9478-7}.

\bibitem[Polunchenko et~al.(2018)Polunchenko, Mart\'{i}nez, and
  San~Mart\'{i}n]{Polunchenko+etal:TPA2018}
A.~S. Polunchenko, S.~Mart\'{i}nez, and J.~San~Mart\'{i}n.
\newblock A note on the quasi-stationary distribution of the {S}hiryaev
  martingale on the positive half-line.
\newblock \emph{Teoriya Veroyatnostei i ee Primeneniya}, 63\penalty0
  (3):\penalty0 565--583, 2018.
\newblock \doi{10.4213/tvp5154}.
\newblock English version published by SIAM in Theory of Probability and Its
  Applications, vol.63, no.~3, pp.~464--478, February 2019; doi:
  10.1137/S0040585X97T989179.

\bibitem[Prudnikov et~al.(1990)Prudnikov, Brychkov, and
  Marichev]{Prudnikov+etal:Book1990}
A.~P. Prudnikov, Y.~A. Brychkov, and O.~I. Marichev.
\newblock \emph{Integrals and Series, Vol.~3, More Special Functions}.
\newblock Gordon and Breach Science Publishers, New York, NY, 1990.

\bibitem[Schr{\"o}der(2003)]{Schroder:AAP2003}
M.~Schr{\"o}der.
\newblock On the integral of geometric {B}rownian motion.
\newblock \emph{Advances in Applied Probability}, 35\penalty0 (1):\penalty0
  159--183, 2003.
\newblock \doi{10.1239/aap/1046366104}.

\bibitem[Shiryaev(1961)]{Shiryaev:SMD61}
A.~N. Shiryaev.
\newblock The problem of the most rapid detection of a disturbance in a
  stationary process.
\newblock \emph{Soviet Mathematics---Doklady}, 2:\penalty0 795--799, 1961.
\newblock (Translated from Dokl. Akad. Nauk SSSR 138:1039--1042, 1961).

\bibitem[Shiryaev(1963)]{Shiryaev:TPA63}
A.~N. Shiryaev.
\newblock On optimum methods in quickest detection problems.
\newblock \emph{Theory of Probability and Its Applications}, 8\penalty0
  (1):\penalty0 22--46, January 1963.
\newblock \doi{10.1137/1108002}.

\bibitem[Shiryaev(2002)]{Shiryaev:Bachelier2002}
A.~N. Shiryaev.
\newblock Quickest detection problems in the technical analysis of the
  financial data.
\newblock In H.~Geman, D.~Madan, S.~R. Pliska, and T.~Vorst, editors,
  \emph{Mathematical Finance---{B}achelier Congress 2000}, Springer Finance,
  pages 487--521. Springer, Berlin, 2002.
\newblock \doi{10.1007/978-3-662-12429-1_22}.

\bibitem[Slater(1960)]{Slater:Book1960}
L.~J. Slater.
\newblock \emph{Confluent Hypergeometric Functions}.
\newblock Cambridge University Press, Cambirdge, UK, 1960.

\bibitem[Srivastava and Karlsson(1985)]{Srivastava+Karlsson:Book1985}
H.~M. Srivastava and P.~W. Karlsson.
\newblock \emph{Multiple {G}aussian Hypergeometric Series}.
\newblock Halsted Press, New York, NY, 1985.

\bibitem[Whittaker(1904)]{Whittaker:BAMS1904}
E.~T. Whittaker.
\newblock An expression of certain known functions as generalized
  hypergeometric functions.
\newblock \emph{Bulletin of the American Mathematical Society}, 10\penalty0
  (3):\penalty0 125--134, 1904.

\bibitem[Whittaker and Watson(1927)]{Whittaker+Watson:Book1927}
E.~T. Whittaker and G.~N. Watson.
\newblock \emph{A Course of Modern Analysis}.
\newblock Cambridge University Press, fourth edition, 1927.

\bibitem[Wong(1964)]{Wong:SPMPE1964}
E.~Wong.
\newblock The construction of a class of stationary {M}arkoff processes.
\newblock In R.~Bellman, editor, \emph{Stochastic Processes in Mathematical
  Physics and Engineering}, pages 264--276. American Mathematical Society,
  Providence, RI, 1964.

\bibitem[Yor(1992)]{Yor:AAP1992}
M.~Yor.
\newblock On some exponential functionals of {B}rownian motion.
\newblock \emph{Advances in Applied Probability}, 24\penalty0 (3):\penalty0
  509--531, September 1992.
\newblock \doi{10.2307/1427477}.

\end{thebibliography}

\end{document}